\documentclass[a4paper,12pt]{article}
\pdfoutput=1    

\usepackage[a4paper,left=1.2in,right=1.2in,top=1.15in,bottom=0.99in,nohead]{geometry}

\usepackage{setspace}
\usepackage{authblk}
\usepackage{adjustbox}
\usepackage{caption}
\usepackage{subcaption}

\usepackage{rotating} 
\usepackage{csquotes}
\usepackage{amssymb}
\usepackage{lipsum}
\usepackage{amsmath}
\DeclareMathOperator{\sgn}{sgn}
\usepackage{latexsym}
\usepackage{dsfont}
\usepackage{multirow}
\usepackage{floatpag}
\usepackage{eurosym}
\usepackage{graphicx}
\usepackage{url}
\usepackage{natbib}
\usepackage{bbm}
\usepackage{braket}
\usepackage{bm}
\usepackage[section]{placeins}
\usepackage{amsthm}    
\usepackage{enumerate}
\usepackage{pdflscape}     
\usepackage{array}
\usepackage{mathtools}     

\newcolumntype{R}[2]{
    >{\adjustbox{angle=#1,lap=\width-(#2)}\bgroup}
    l
    <{\egroup}
}
\newcommand*\rot{\multicolumn{1}{R{60}{1em}}}

\newcommand{\RN}[1]{
  \textup{\uppercase\expandafter{\romannumeral#1}}
}

\newtheorem{theorem}{Theorem}[section]	
\newtheorem{corollary}[theorem]{Corollary}
\newtheorem{lemma}[theorem]{Lemma}

\theoremstyle{definition}
\newtheorem{example}[theorem]{Example}

\begin{document}

\title{Cross-impact and no-dynamic-arbitrage}

\author[a,b]{\normalsize{Michael Schneider}}  
\author[a,c]{\normalsize{Fabrizio Lillo}}

\affil[a]{Scuola Normale Superiore, Piazza dei Cavalieri 7, 56126 Pisa, Italy}
\affil[b]{Deutsche Bundesbank, Wilhelm-Epstein-Stra{\ss}e 14, 60431 Frankfurt am Main, Germany}
\affil[c]{Department of Mathematics, University of Bologna, Piazza di Porta San Donato 5, 40126 Bologna, Italy}

\maketitle

\begin{abstract}
We extend the ``No-dynamic-arbitrage and market impact''-framework of Jim Gatheral [\textit{Quantitative Finance}, \textbf{10}(7): 749-759 (2010)] to the multi-dimensional case where trading in one asset has a cross-impact on the price of other assets. From the condition of absence of dynamical arbitrage we derive theoretical limits for the size and form of cross-impact that can be directly verified on data. For bounded decay kernels we find that cross-impact must be an odd and linear function of trading intensity and cross-impact from asset $i$ to asset $j$ must be equal to the one from $j$ to $i$.
To test these constraints we  estimate  cross-impact among sovereign bonds traded on the electronic platform MOT. While we find significant violations of the above symmetry condition of cross-impact, we show that these are not arbitrageable with simple strategies because of the presence of the bid-ask spread.
\end{abstract}

\medskip \textbf{Keywords}: Market impact, dynamic arbitrage, cross-impact, MOT, sovereign bonds 

{\let\thefootnote\relax\footnotetext{Authors email addresses are michael.schneider@sns.it and fabrizio.lillo@sns.it, respectively. This paper represents the authors' personal opinions and does not necessarily reflect the views of the Deutsche Bundesbank, its staff, or the Eurosystem. 
We thank Borsa Italiana for providing us with access to their datasets and we thank Michael Benzaquen, Jean-Philippe Bouchaud, Katia Colaneri, Thomas Guhr, Florian Kl{\"o}ck, Enrico Melchioni, Loriana Pelizzon, Damian Taranto and participants of Market Microstructure - Confronting Many Viewpoints \#4, Paris, and the $\RN{18}$ Workshop on Quantitative Finance, Milan, for helpful comments. 
We are responsible for all remaining errors.
Michael Schneider acknowledges financial support from the Association of Foundations of Banking Origin.} }

\newpage
\section{Introduction}

Market impact, i.e. the interplay between order flow and price dynamics, has increasingly attracted the attention of researchers and of the industry in the last years (\cite{bouchaud2008markets}). Despite its importance, both from a fundamental point of view (due to its relation with supply-demand) and from an applied point of view (due to its relation with transaction cost analysis and optimal execution), market impact is not yet fully understood and different models and approaches have been proposed and empirically tested. 

It is important to note that market impact refers to different aspects of this interplay and that they should be carefully distinguished (see \cite{bouchaud2008markets} for a discussion). First, there is the impact of an individual trade or of the aggregated signed\footnote{Conventionally buyer (seller) initiated trades have positive (negative) volume and order sign.}  order flow in a fixed time period. Second, especially for transaction cost analysis and optimal execution, it is more interesting to consider the impact of a large trade (sometimes termed as  \textit{meta-order}) executed incrementally by the same investor with many transactions and orders over a given interval of time. Both these definitions of market impact are typically investigated by considering one asset at a time, i.e. without considering the effect of a trade (or of an order) in one asset on the price dynamics of another asset.  

This is the third type of impact, that we study in this paper, and that is termed {\it cross-impact}. Understanding and modeling cross-impact is important for many reasons, since it enters naturally in problems like optimal execution of portfolios, statistical arbitrage of a set of assets, and to study the relation between correlation in prices and correlation in order flows. Conceptually, while self-impact, the impact of a trade on the price of the same asset, can qualitatively be understood as the result of a mechanical component (e.g. a market order with volume larger than the volume at the opposite best) and an induced component (resilience of the order book due to liquidity replenishment), the source of cross-impact is less clear. On one side if a trader is liquidating simultaneously two assets one can obviously expect a non-vanishing cross-impact. Since impact measures are typically averages across many measurements, this mechanism produces cross-impact if simultaneous trades and positively correlated order flow are frequently observed. On the other side, liquidity providers and arbitrageurs detect local mispricing between correlated assets and bet on a reversion to normality by placing orders. In other words this induced cross-impact relates to the possibility of identifying price changes due to local imbalances of supply-demand in one asset (rather than to fundamental information) and of exploiting the possibly short-lived mispricing between correlated assets. 

Even though cross-impact has already been discussed e.g. in \cite{almgren2001optimal} as an extension of their optimal execution model and in \cite{hasbrouck2001common} in a principal component approach, it has only recently been the subject of extensive empirical studies. 
\cite{pasquariello2013strategic} empirically show that order imbalance has a significant impact on returns across stocks and sectors at the daily scale. \cite{wang2016average,wang2016cross} present evidence for a structured price cross-response and correlated order flow at the intraday time-scale across stock pairs. \cite{benzaquen2016dissecting} link cross-response and order flow in a multivariate extension of the Transient Impact Model (TIM) of \cite{bouchaud2004fluctuations} and show that their model can reproduce a significant part of the well-known correlation structure of asset returns. 
\cite{mastromatteo2017trading} exploits this link between correlation and cross-impact, showing that cross-impact is crucial for a correct estimation of liquidity when trading portfolios.
\cite{wang2016microscopic} perform a scenario analysis 
in a model similar to \cite{benzaquen2016dissecting}, finding that cross-response is related both to cross-impact and correlated order flow across assets. 
 
It is clear that the cross-impact problem talks naturally to dynamic arbitrage and to the possibility of price manipulation, as already discussed in \cite{huberman2004price}. 
It is therefore natural to ask which constraints the no-price-manipulation assumption imposes on market impact models. There is a large literature on this problem, often focused on the single asset case (\cite{huberman2004price}, \cite{gatheral2010no}, \cite{alfonsi2012order}, \cite{GatheralHandbook}, \cite{curato2016discrete, curato2016optimal}). 

In the multi-asset case many articles are concerned with strategies for optimal portfolio liquidation in the presence of volatility risk by expanding the model of \cite{almgren2003optimal}. \cite{schied2010optimal} show that optimal execution strategies for investors with constant absolute risk aversion are deterministic and for a more general absolute risk aversion setting \cite{schoneborn2016adaptive} finds that the optimal strategies for investors with different risk preferences vary only in the speed of their execution. The case of cross-impact in a lit market when there is also a dark pool is discussed in \cite{kratz2015portfolio}. 
\cite{tsoukalas2016dynamic} instead develop a limit order book model with cross-impact and find that it can be optimal to temporarily take up positions contrary to the direction of one's trading intent. 
The paper most related to ours from a theoretical point of view is \cite{alfonsi2016multivariate}. They model multi-asset price impact by considering a linear version of the model of \cite{gatheral2010no}, extending thus the model already considered in \cite{alfonsi2012order}. They show that the absence of no-dynamic-arbitrage on a discrete-time grid corresponds to the decay kernel being described by a positive definite matrix-valued function. Furthermore they formulate further conditions to ensure that resulting optimal strategies are well-behaved, both in discrete and continuous time, and show how such kernels can be constructed. However it is not generally straightforward to establish positive definiteness when a decay kernel is obtained coordinate-wise from estimations and therefore necessary conditions for the absence of dynamic arbitrage that can be verified on estimated decay kernels prove useful.

In this paper, focusing on the TIM framework in continuous time, we establish some easily verifiable necessary conditions that must be satisfied by self- and cross-impact, in order to avoid the presence of price manipulation. We do this in the same spirit of \cite{gatheral2010no} by explicitly constructing trading strategies that lead to price manipulation and negative expected cost. Some of these relations are simple generalizations to the multi-asset case of the corresponding relations for the single asset case derived in \cite{gatheral2010no}. Other relations that we derive here are instead genuinely relative to the multi-asset case. In particular we formalize in Lemma \ref{lemma:linbounded_symm} that cross-impact must be symmetric, i.e. the return induced in asset $i$ by a trade of volume $v$ in asset $j$  must be equal to the impact of a trade of the same volume $v$ in asset $i$ on the price of asset $j$. 

It is natural to ask whether this symmetry condition is empirically verified. In this paper we study a market whose microstructure, to the best of our knowledge, has not been explored so far. This is the MOT market for sovereign bonds\footnote{\url{http://www.lseg.com/areas-expertise/our-markets/borsa-italiana/fixed-income-markets/mot}}, a fully electronic limit order book market for fixed income assets. One of the reasons for our choice is that, due to the nature of the traded assets, we expect cross-impact, especially due to quote revisions, to be very high. In fact, two Italian fixed-rate BTPs differ mostly through the coupon rate and the time-to-maturity - factors which are accounted for in the price, which  moves in a very synchronised way since for most purposes both titles are perfectly interchangeable. 

Calibrating a multivariate TIM in trade time we find that there exist pairings of bonds where the symmetry condition of cross-impact is violated in a statistically significant way. By comparing the potential profit from a simple arbitrage strategy to transaction costs such as the bid-ask spread, which are neglected in the model, we conclude that arbitraging is not profitable. It is also crucial to point out that the empirical part of the paper is important because it is the first application of a TIM model to fixed income markets and to the best of our knowledge it is the first work to consider cross-impact of single market orders and not the order sign imbalance aggregated over fixed time intervals (as done in \cite{busseti2011}, \cite{benzaquen2016dissecting} and \cite{wang2016average}).

The rest of the paper is structured as follows. Section \ref{sec:model} introduces our model and the links to the no-dynamic-arbitrage principle. Section \ref{sec:generalconstraints} discusses some general constraints on cross-impact that arise in our framework for bounded decay kernels and the corresponding proofs are given in Appendix \ref{app:proofs}. In Section \ref{sec:empirical} we study cross-impact empirically and compare to the theoretical results in Section \ref{sec:generalconstraints}. Finally Section \ref{sec:conclusion} concludes.

%
\section{Model Setup}
\label{sec:model}

The presence of dynamic arbitrage depends on the market impact model. In this paper we consider the Transient Impact Model (TIM) introduced in \cite{bouchaud2004fluctuations} (see \cite{bouchaud2008markets} for a discussion). The model has been originally formulated in discrete time, and its continuous time version, that we present in the next section, has been proposed in \cite{gatheral2010no}.

\subsection{Price Process and Cost of Trading}

\cite{gatheral2010no} assumes that the asset price $S_t$ at time $t$ follows a random walk with a drift determined by the cumulative effect of previous trades \begin{equation}
S_t = S_0 + \int_0^t{f(\dot{x}_s) G(t-s) \mathrm{d}s} + \int_0^t{\sigma \mathrm{d}Z_s}
\label{eq:price_1D}
\end{equation}
where $f(\dot{x}_s)$ represents the (instantaneous) impact of trading at a rate $\dot{x}_s$ at time $s<t$ weighted by a decay kernel $G(\tau)$ with $\tau = t-s$. $Z_s$ is a noise process, for example a Wiener process, and $\sigma$ is the volatility. For consistency with equation (\ref{eq:cost_ND}) below, the trading rate $\dot{x}$ is given in units of number of shares per unit of time. 
In our multivariate extension we consider the prices of a set of assets where the drift in asset $i$ not only depends on the trading history of asset $i$ but also on past trades in assets $j\neq i$. Thus the price process of asset $i$ is given by
\begin{equation}
S^i_t = S^i_0 + \sum_j{ \int_0^t{f^{ij}(\dot{x}^j_s) G^{ij}(t-s) \mathrm{d}s} } + \int_0^t{\sigma^i \mathrm{d}Z^i_s}
\label{eq:price_ND}
\end{equation}
with a correlated noise process $\bm{Z}_s$ (e.g. a multivariate Wiener process) and where in addition to the \textit{self-impact} terms $f^{ii}$ and $G^{ii}$ we have introduced additive \textit{cross-impact} terms $f^{ij}$ and $G^{ij}$, $i\neq j$, that represent the impact of trading in asset $j$ on the price of asset $i$. 

For a trading strategy $\Pi = \left\{\bm{x}_t \right\}, \: t \in [0,T]$ where $\bm{x}_t$ is the vector of asset positions $x^i_t$ in asset $i$ at time $t$, the expected cost (or implementation shortfall) is
\begin{align}
C(\Pi) &= \mathbb{E}\left[ \sum_i{ \int_0^T{\dot{x}^i_t (S^i_t - S^i_0) \mathrm{d}t}  } \right] \nonumber \\
       &= \sum_{i,j}{ \int_0^T{ \dot{x}^i_t \mathrm{d}t \int_0^t{f^{ij}(\dot{x}^j_s) G^{ij}(t-s) \mathrm{d}s   }  } }
\label{eq:cost_ND}
\end{align}
where as in \cite{gatheral2010no} we  consider only costs due to price impact, i.e. the price shift induced by our own trading, and neglect \emph{slippage costs}, i.e.  costs due to all other market frictions such as bid-ask spreads and trading fees. Also \cite{alfonsi2016multivariate} ignore these costs by arguing that a sophisticated trading strategy consists not only of market orders but also of limit orders, thus on average both paying (earning) a half spread from market (limit) orders. In practice the constraints we derive may well be weakened when slippage costs are unavoidable, e.g. when immediacy requires to execute a trading strategy with market orders.

\subsection{Principle of no-dynamic-arbitrage}

\cite{huberman2004price} define a \textit{round-trip trade} as a sequence of trades whose sum is zero, i.e. a trading strategy $\Pi = \left\{\bm{x}_t \right\}$ with \begin{equation}
\int_0^T{\dot{\bm{x}}_t \mathrm{d}t} = \bm{0}  \: .
\label{eq:roundtripstrategy}
\end{equation}
This implies a round-trip in all assets traded in the strategy, i.e. $\int_0^T{\dot{x}^i_t \mathrm{d}t} = 0 \: \forall \: i$. A \textit{price manipulation} is a round-trip trade $\Pi$ whose expected cost $C(\Pi)$ is negative and the \textit{principle of no-dynamic-arbitrage} states that such a price manipulation is impossible. Formally, the principle requires that for any round trip trade $\Pi$ it is
\begin{equation}
C(\Pi) \geq 0 \: .
\label{eq:nodynamicarb}
\end{equation}
In the one-dimensional case this translates to 
\begin{equation}
C(\Pi) = \int_0^T{ \dot{x}_t \mathrm{d}t \int_0^t{f(\dot{x}_s) G(t-s) \mathrm{d}s   }  } \geq 0
\label{eq:nodynamicarb_1D}
\end{equation}
and imposes a relationship on the market impact function $f(\cdot)$ and the decay kernel $G(\cdot)$. The functions $f(\cdot)$ and $G(\cdot)$ are said to be \textit{consistent} if they exclude the possibility of price manipulation. Several papers have studied the consistency of this kind of market impact models (for a review see \cite{GatheralHandbook}). \cite{gatheral2011exponential} show that any decay kernel that is non-singular at time zero is inconsistent with non-linear $f(\cdot)$. Moreover \cite{gatheral2010no} sets some necessary constraints for no arbitrage for power law dependence of $f$ and $G$ and   \cite{curato2016optimal} show that inconsistencies can also arise for power-law $f(\cdot)$ and $G(\cdot)$ even when necessary conditions derived in \cite{gatheral2010no} are not violated.

In the multidimensional case the cost requirement is
\begin{equation}
C(\Pi) = \sum_{i,j}{ \int_0^T{ \dot{x}^i_t \mathrm{d}t \int_0^t{f^{ij}(\dot{x}^j_s) G^{ij}(t-s) \mathrm{d}s   }  } } \geq 0
\label{eq:nodynamicarb_ND}
\end{equation}
and we are similarly looking at what forms of $\bm{f}(\cdot)$ and $\bm{G}(\cdot)$ are consistent when there is cross-impact.\footnote{\cite{alfonsi2016multivariate} consider a slightly different case where instead of a round-trip strategy, they consider the liquidation of an existing portfolio. This is equivalent in the limit of building up the portfolio infinitely slowly and when impact is purely transient. } 
Specifically we are asking what limits there are to cross-impact, i.e. the form of $f^{ij}(\cdot), i\neq j$, and whether the presence of cross-impact leads to possible arbitrages in pairs of $\bm{f}(\cdot)$ and $\bm{G}(\cdot)$ that are consistent in the one-dimensional case.


%
\section{General constraints on cross-impact for bounded decay kernels}
\label{sec:generalconstraints}

Let us for the remainder of this paper assume, without loss of generality, that $\bm{G}(\tau)$ is a dimensionless quantity, i.e. all dimensionality of cross-impact, including the sign, 
is captured by the instantaneous market impact function $\bm{f}$. 

In this section we also assume that the decay kernel $\bm G(\tau)$ is non-increasing, right-continuous at $\tau = 0$ and bounded in all components, i.e. that there exists an upper bound $U > 0$ so that $|G^{ij}(\tau)| < U$ for all $\tau \in [0,\infty)$ and all $i,j$. 
While we do not consider unbounded kernels in this section, we discuss in Appendix \ref{app:uniquedelta} some constraints that arise for the popular class of pure power law kernels for cross-impact.
The non-increasing assumption rules out non-zero decay kernels with $G^{ij}(0)=0$ so that we are able to take $\bm{G}$ as normalized to $1$ for its smallest lag $\tau = t-s$, i.e. $G^{ij}(0) = 1$ for all pairs $ij$.\footnote{The non-increasing assumption is necessary already in the single-asset case to avoid arbitrage opportunities from simple buy-hold-sell strategies. In the multi-asset case we require it e.g. for our symmetry result in Lemma \ref{lemma:linbounded_symm}. }
A special case of such a kernel is exponential decay $G^{ij}(t-s) = e^{- \rho^{ij} \left( t-s \right)}$ as in \cite{obizhaeva2013optimal}.

In the following we consider the two-dimensional case, $i \in \left\{a,b\right\}$, i.e. the number of assets $N=2$. Note that all results from the one-dimensional case still hold since we are free to choose a trading strategy that is active only in one asset, e.g. $\Pi = \left(\Pi^a, \Pi^b \right)^{\intercal} = \left(\Pi^a, 0 \right)^{\intercal} \:, \: \forall \:t \in [0,T]$ where $\Pi^a$ is a round-trip trading strategy in asset $a$.

The proofs of the results in this section are given in Appendix \ref{app:proofs} and are obtained following the approach of \cite{gatheral2011exponential}. All results can also be obtained following \cite{gatheral2010no} under the slightly more restrictive assumptions of the decay kernel $\bm G$ being representable as a suitable series expansion and considering only the first non-zero orders in the limit of $\tau \rightarrow 0^+$.


\subsection{A simple strategy with two assets}
\label{sec:strategy_easy}

In the following we will often make use of a simple strategy in two assets which is split into two phases of trading at constant rates. 

\begin{example}{A simple in-out strategy.}
\label{eg:strategy_easy}
\\
At first we build up a position at a constant trading rate from time $0$ until time $\Theta$, with $0 < \Theta < T$, and then liquidate the position in a second phase from $\Theta$ until $T$. 

\begin{align}
\Pi = \left\{ {\bm{x}}_t \right\} \quad, \quad \dot{\bm{x}}_t = \begin{cases} 
\left(v_{a,\RN{1}}, v_{b,\RN{1}} \right)^{\intercal} \quad &\text{for} \quad 0 \leq t \leq \Theta \\
\left(v_{a,\RN{2}}, v_{b,\RN{2}} \right)^{\intercal} \quad &\text{for} \quad \Theta < t \leq T 
\end{cases}.
\label{eq:strategy_specific}
\end{align}
The velocities $v_{i,\RN{1}}$, $v_{i,\RN{2}}$ are constrained by our choice of the strategy. Since $\Pi$ is a round-trip strategy, the trading rates $v_{i,\RN{1}}$ and $v_{i,\RN{2}}$ have opposite signs, i.e. $\kappa = v_{i,\RN{1}} / v_{i,\RN{2}} < 0$, and the time $\Theta$ when the trading direction changes is given as $\Theta =  \frac{-v_{i,\RN{2}}}{v_{i,\RN{1}} - v_{i,\RN{2}} }T =  \frac{1}{1 - \kappa}T $. 
Let us further fix notation with $\lambda = v_{a, \RN{1}} / v_{b, \RN{1}} = v_{a, \RN{2}} / v_{b, \RN{2}}$. 
Figure \ref{fig:strategies3_1} illustrates a possible realization of this strategy with $\lambda < 0$. 

The cost of this strategy can be decomposed as $C(\Pi) = \sum_{i,j = a,b}{ C_A^{ij} + C_B^{ij} + C_C^{ij}}$ where 
\begin{align}
C_A^{ij} 	&=  v_{i,\RN{1}} f^{ij}(v_{j,\RN{1}}) \int_0^{\Theta}{\mathrm{d}t \int_0^t{G^{ij}(t-s) \mathrm{d}s }    }    \nonumber  \\
C_B^{ij} 	&=  v_{i,\RN{2}} f^{ij}(v_{j,\RN{1}}) \int_{\Theta}^{T}{\mathrm{d}t \int_0^{\Theta}{G^{ij}(t-s) \mathrm{d}s }    } \label{eq:c21} \\ 
C_C^{ij} 	&=  v_{i,\RN{2}} f^{ij}(v_{j,\RN{2}}) \int_{\Theta}^{T}{\mathrm{d}t \int_{\Theta}^t{G^{ij}(t-s) \mathrm{d}s }    }  \nonumber
\end{align}

In the one-dimensional case the principle of no-dynamic-arbitrage imposes a constraint on the term $C_B^{ii}$ 
and from equation (\ref{eq:nodynamicarb_ND}) it follows that $- C_B^{ii} \leq C_A^{ii} + C_C^{ii}$ as in \cite{gatheral2010no}. 
For the multi-dimensional case $C(\Pi) \geq 0$ further implies a relationship between the strength of cross-impact and self-impact. 
\end{example}

In the following we will try to exploit cross-impact in order to push down the cost of strategies. Supposing that cross-impact is positive for positive trading rates, i.e. $f^{ij}(v) > 0$ for $v > 0$, we can choose $\lambda < 0$, e.g. trading into asset $a$ while contemporaneously trading out of asset $b$, in order to get a negative contribution from cross-impact.

\begin{figure}[!t]
    \centering
        \begin{subfigure}[b]{0.5\textwidth}
                \includegraphics[width=\textwidth]{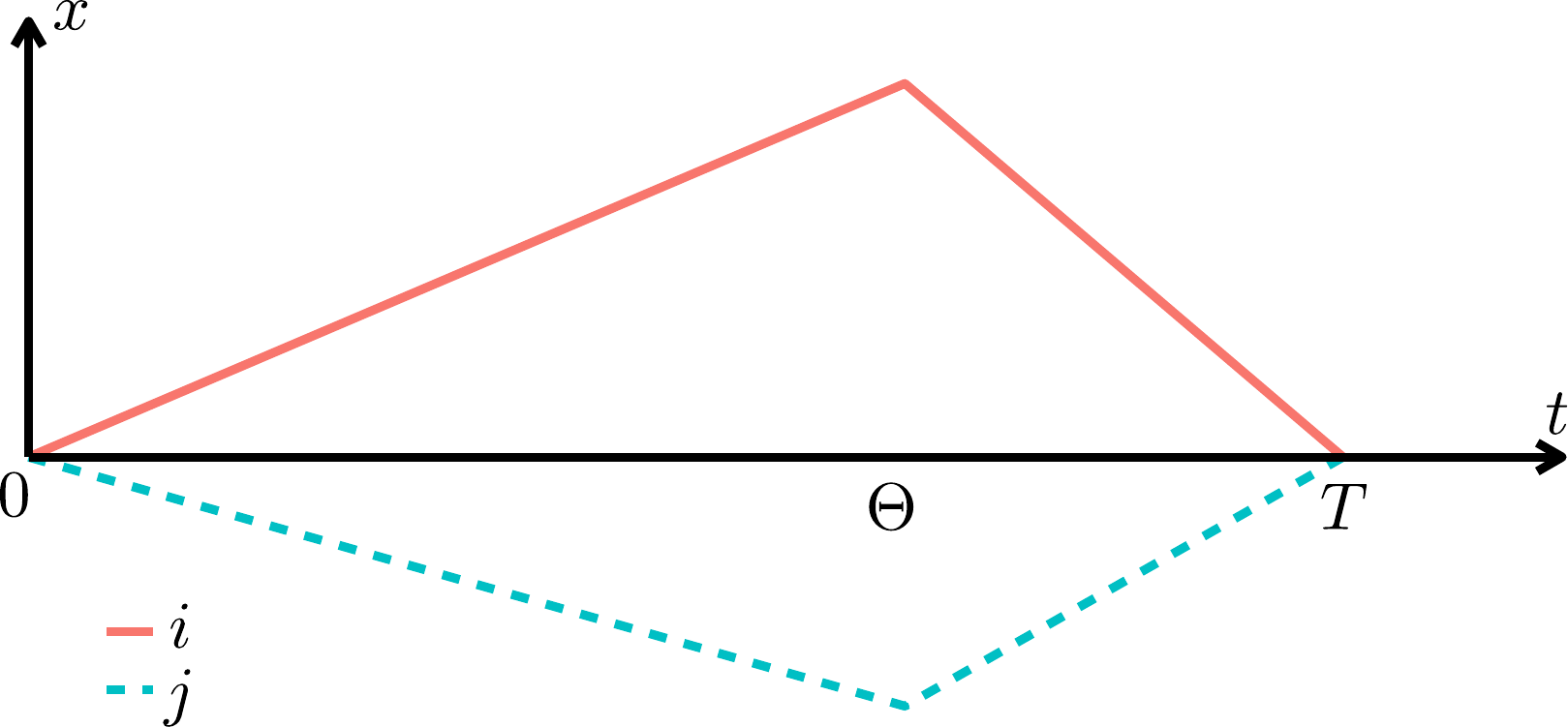}
                \caption{Strategy as in Example \ref{eg:strategy_easy}}
                \label{fig:strategies3_1}
    \end{subfigure}
    
        \begin{subfigure}[b]{0.5\textwidth}
                \includegraphics[width=\textwidth]{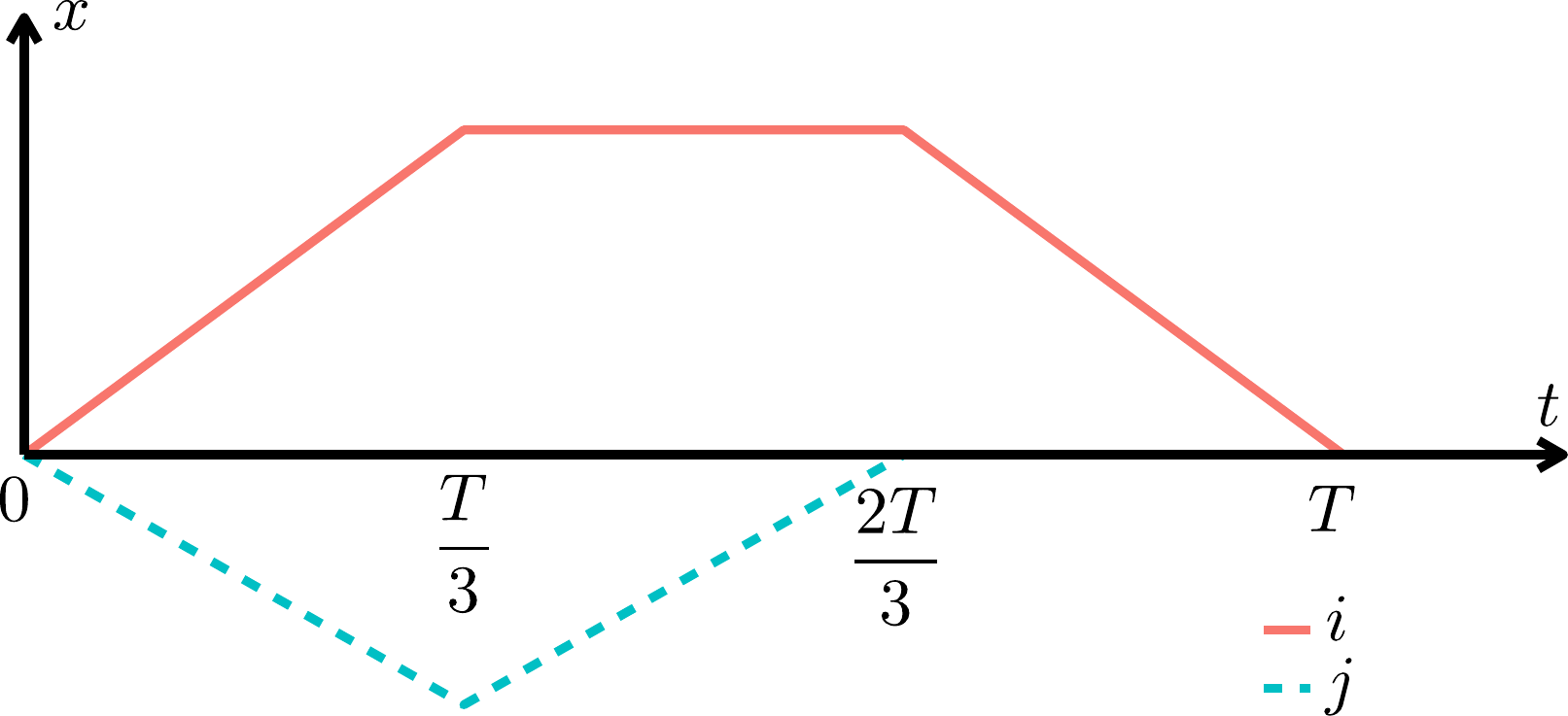}
                \caption{Strategy as in Example \ref{eg:linperm_asymm}}
                \label{fig:strategies3_8}
    \end{subfigure}
    \caption{Schematic of the trading strategies in Example \ref{eg:strategy_easy} (upper panel) and Example \ref{eg:linperm_asymm} (lower panel). }
		\label{fig:strategies} 
\end{figure} 

\subsection{Cross-impact as odd function of the trading rate}
\label{sec:odd}

In the one-dimensional case \cite{gatheral2010no} shows that permanent market impact needs to be an odd function in the rate of trading $v$, i.e. $f(v) = -f(-v)$.
We show here that the same holds for cross-impact for decay kernels that are non-singular around $\tau \rightarrow 0^+$.

\begin{lemma}
\label{lemma:antisym_ND}
Assume a price process as in (\ref{eq:price_ND}) with a bounded, non-increasing decay kernel ${\bm G}$ that is continuous around $\tau = 0$. Then such a model admits price manipulation if ${\bm f}$ is not an odd function of the trading rate, i.e. unless 
\begin{equation}
f^{ij}(v) = -f^{ij}(-v) \: ~~~~~~\forall \: i,j \: .
\label{eq:antisym_ND}
\end{equation}

\end{lemma}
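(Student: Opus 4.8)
The plan is to follow \cite{gatheral2011exponential} and, for any $\bm f$ violating (\ref{eq:antisym_ND}), exhibit an explicit round-trip strategy with $C(\Pi)<0$. Concretely, I would use the in-out strategy of Example~\ref{eg:strategy_easy} on a \emph{short} horizon: take $\kappa=v_{i,\RN{1}}/v_{i,\RN{2}}=-1$, so that $\Theta=T/2=:\tau$ and $v_{i,\RN{2}}=-v_{i,\RN{1}}$ for $i\in\{a,b\}$, and then let $\tau\to 0^+$. Since $\bm G$ is bounded, non-increasing and continuous around $\tau=0$, the normalization $G^{ij}(0)=1$ gives $\sup_{0\le r\le 2\tau}|G^{ij}(r)-1|\le 1-G^{ij}(2\tau)\to 0$ as $\tau\to0^+$; hence in the decomposition $C(\Pi)=\sum_{i,j}\bigl(C_A^{ij}+C_B^{ij}+C_C^{ij}\bigr)$ from Example~\ref{eg:strategy_easy} (see (\ref{eq:c21})) every kernel $G^{ij}$ may be replaced by $1$ up to an $o(\tau^2)$ error. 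Evaluating the three now-elementary double integrals (which, with $\Theta=\tau$ and $T=2\tau$, equal $\tfrac{\tau^2}{2}$, $\tau^2$ and $\tfrac{\tau^2}{2}$) and setting $g^{ij}(v):=f^{ij}(v)+f^{ij}(-v)$ — an even function, with $g^{ij}(0)=0$ under the standing convention $f^{ij}(0)=0$ — this yields
\begin{equation}
C(\Pi)\;=\;-\frac{\tau^2}{2}\sum_{i,j} v_{i,\RN{1}}\, g^{ij}\!\left(v_{j,\RN{1}}\right)\;+\;o(\tau^2),\qquad \tau\to 0^+.
\label{eq:sketch_leading}
\end{equation}

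It then suffices to show that no-dynamic-arbitrage forces the quantity $Q:=\sum_{i,j} v_{i,\RN{1}}\,g^{ij}(v_{j,\RN{1}})$ to vanish for every admissible pair of rates $(v_{a,\RN{1}},v_{b,\RN{1}})$. Indeed, if $Q\neq 0$ for some rates then, by evenness of the $g^{ij}$, negating both rates flips the sign of $Q$, so one of the two resulting strategies has leading cost $-\tfrac{\tau^2}{2}Q<0$ and is a price manipulation once $\tau$ is small. To conclude $Q\equiv 0$ I would first take $v_{b,\RN{1}}=0$ (a round trip in asset $a$ alone): this is exactly the one-dimensional case and gives $g^{aa}\equiv 0$, and symmetrically $g^{bb}\equiv 0$. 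With the diagonal terms gone, taking $\lambda=v_{a,\RN{1}}/v_{b,\RN{1}}=1$ leaves $Q=v_{b,\RN{1}}\bigl(g^{ab}(v_{b,\RN{1}})+g^{ba}(v_{b,\RN{1}})\bigr)$, hence $g^{ab}+g^{ba}\equiv 0$; taking $\lambda=-1$ leaves $Q=v_{b,\RN{1}}\bigl(g^{ba}(v_{b,\RN{1}})-g^{ab}(v_{b,\RN{1}})\bigr)$, hence $g^{ba}-g^{ab}\equiv 0$. Adding and subtracting yields $g^{ab}\equiv g^{ba}\equiv 0$, i.e.\ $f^{ij}(v)=-f^{ij}(-v)$ for all $i,j$.

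The step I expect to need the most care is establishing (\ref{eq:sketch_leading}) rigorously: the uniform $o(\tau^2)$ control of the kernel error is exactly where all three hypotheses on $\bm G$ are used — boundedness to dominate the integrands, the non-increasing property both to legitimise the normalization $G^{ij}(0)=1$ and to sign the error $1-G^{ij}(\cdot)\ge0$, and continuity at $0$ to make $1-G^{ij}(2\tau)\to0$. The genuinely new ingredient compared with \cite{gatheral2010no} is the final step: no single family of strategies separates $g^{ab}$ from $g^{ba}$, so one must play the $\lambda=+1$, $\lambda=-1$ and single-asset strategies against one another to force \emph{each} cross-impact function to be odd on its own. (Alternatively, following \cite{gatheral2010no}, one may assume $\bm G$ is given by a convergent series and match the leading $\tau\to0^+$ orders, obtaining the same $Q$.)
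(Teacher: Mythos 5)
Your proposal is correct and takes essentially the same route as the paper's proof: the same in-out round trips of Example~\ref{eg:strategy_easy} with the kernel frozen at its lag-zero value via boundedness/monotonicity/right-continuity, and the same trick of playing single-asset strategies and the two relative orientations $\lambda=\pm1$ against one another to isolate the even parts of $f^{ij}$ --- the paper packages this as explicit $\varepsilon$-bounds at a fixed small horizon $T_\varepsilon$ and as the combined cost $C(\Pi_1)+C(\Pi_2)$, whereas you take a leading-order expansion in $\tau$ with sign flips, but the mechanism is identical. The one caveat is your ``standing convention'' $f^{ij}(0)=0$: the paper does not assume this but derives it afterwards as Corollary~\ref{cor:absence}, so strictly you should either justify it as part of the model (no trade, no impact) or handle a possible constant offset separately; since the paper's own proof relies on the same normalization when it reduces oddness of $f^{ii}$ to the single-asset case, this is a shared implicit hypothesis rather than a defect specific to your argument.
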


Therefore we will assume for the remainder of this paper that (\ref{eq:antisym_ND}) holds. As a corollary it follows that 

\begin{corollary}
\label{cor:absence}
Absence of dynamic-arbitrage for a price process as in (\ref{eq:price_ND}) with a decay kernel that is bounded, non-increasing and continuous around $\tau = 0$ requires that
\begin{equation}
f^{ij}(v=0) = 0 ~~~~~~ \forall \: i,j \: .
\label{eq:absence}
\end{equation}
\end{corollary}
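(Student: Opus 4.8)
The plan is to read off the statement as an immediate consequence of Lemma \ref{lemma:antisym_ND}. Under exactly the hypotheses assumed here — $\bm G$ bounded, non-increasing, and continuous around $\tau = 0$ — that lemma shows that absence of dynamic arbitrage forces every component $f^{ij}$ to be an odd function of the trading rate, i.e. $f^{ij}(v) = -f^{ij}(-v)$ for all admissible $v$ and all $i,j$. The argument is then a single line: evaluating this identity at $v=0$ gives $f^{ij}(0) = -f^{ij}(0)$, hence $2\,f^{ij}(0) = 0$, so $f^{ij}(0) = 0$ for every pair $i,j$. Here one uses that $v=0$ belongs to the domain of each $f^{ij}$, which is the case since $f^{ij}(\dot x^j_s)$ must be defined whenever asset $j$ is not being traded at time $s$ (e.g. during the holding phase of a buy--hold--sell strategy, or in a strategy active only in a different asset).

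Only one point deserves care, namely whether the oddness relation of Lemma \ref{lemma:antisym_ND} is to be applied at the origin itself; if one prefers a self-contained route that does not invoke the value of that identity precisely at $v=0$, the conclusion can be recovered directly. Suppose $f^{ij}(0)=c\neq 0$ for some pair $i,j$, and consider a round-trip strategy active only in asset $i$, so that $\dot x^l_s = 0$ for every $l$. Then the $(i,l)$ block of the cost in (\ref{eq:nodynamicarb_ND}) reduces to $f^{il}(0)\int_0^T \dot x^i_t\big(\int_0^t G^{il}(u)\,\mathrm{d}u\big)\,\mathrm{d}t$, a functional that is non-trivial in $\dot x^i$ because $t\mapsto \int_0^t G^{il}(u)\,\mathrm{d}u$ is non-constant whenever $G^{il}\not\equiv 0$, so its sign can be made negative by a suitable choice of the round-trip profile; rescaling this profile by a small factor $\alpha>0$ makes the contaminating contributions (the genuinely nonlinear self-interaction through $f^{ii}$ and the $f^{ij}(0)\neq 0$ offsets for all other indices) subdominant, yielding $C(\Pi)<0$, a price manipulation — contradiction. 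Hence $f^{ij}(0)=0$.

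I do not expect any genuine obstacle here: the corollary is essentially the elementary fact that an odd function vanishes at the origin, with the only modelling subtlety being the admissibility of a zero trading rate as an argument of $\bm f$, which the framework already grants. The economic content worth stressing is that this rules out a constant (trading-rate-independent) component of impact, i.e. it forbids a persistent price drift in one asset generated merely by holding — rather than trading — another.
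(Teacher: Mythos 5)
Your first paragraph is exactly the paper's (implicit) argument: Corollary \ref{cor:absence} is stated as an immediate consequence of Lemma \ref{lemma:antisym_ND}, since oddness $f^{ij}(v)=-f^{ij}(-v)$ evaluated at $v=0$ forces $f^{ij}(0)=0$, and this is correct. The alternative self-contained argument in your second paragraph is not needed (and its claim that the other $f^{il}(0)\neq 0$ offsets become subdominant under rescaling is shaky, since they scale linearly in $\alpha$ just like the targeted term), but the primary proof stands on its own and matches the paper.
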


\subsection{Constraints on the strength of cross-impact}
\label{sec:size}

The cost constraint in equation (\ref{eq:nodynamicarb_ND}) also imposes a constraint on the relative strength of $f^{ij}$. Let us consider a simple example at first. 

\begin{example}{Trading in and out at the same rate}
\label{eg:size_easy}

We consider a strategy as above in Example (\ref{eg:strategy_easy}) where we are trading in and out of positions at the same rate, i.e. $v_{i,\RN{1}} = -v_{i,\RN{2}}$ and therefore $\Theta = T/2$, but in different directions in the two assets, choosing e.g. $v_{a, \RN{1}} = v_a > 0, v_{b, \RN{1}} = -v_b < 0$ and thus $\lambda < 0$. For simplicity let us assume a uniform decay of market impact, i.e. $G^{ij}(t) = G(t)$ for all pairings $ij$. The cost is then
\begin{align}
\label{eq:cost_inoutsame}
C(\Pi) = &\left[ v_a f^{aa}(v_a) + v_b f^{bb}(v_b) - v_a f^{ab}(v_b) - v_b f^{ba}(v_a) \right] \\ \nonumber 
							&\left\{ \int_0^{T/2}{\mathrm{d}t \int_0^t{\left[ G(t-s) - G(t+T/2 - s) \right] \mathrm{d}s } } 
							+ \int_{T/2}^{T}{ \mathrm{d}t \int_{T/2}^{t}{ \left[ G(t-s) - G(T-s) \right] \mathrm{d}s } } \right\}
\end{align}
and \cite{gatheral2010no} shows that the term in curly brackets in equation (\ref{eq:cost_inoutsame}) is greater than zero when further requiring that $G(\cdot)$ is strictly decreasing. Thus the no-dynamic-arbitrage constraint (\ref{eq:nodynamicarb}) requires that
\begin{equation}
v_a f^{aa}(v_a) + v_b f^{bb}(v_b) - v_a f^{ab}(v_b) - v_b f^{ba}(v_a) \geq 0
\label{eq:nodynamicarb_inoutsame}
\end{equation}
for any $v_a, v_b \geq 0$, thus constraining the relative size of the cross-impact terms $f^{ab}$ and $f^{ba}$ with respect to self-impact. 
Note that by setting $v_b = 0$ we recover the one-dimensional case and it follows that $v_a f^{aa}(v_a) \geq 0$.  
\end{example}

In the general case, the decay $G^{ij}(\tau)$ is not uniform and we can not factor out the term in curly brackets in equation (\ref{eq:cost_inoutsame}), instead we have to weight each of the terms of equation (\ref{eq:nodynamicarb_inoutsame}) with a factor that depends on the decay $G^{ij}(\tau)$. Furthermore we are free to choose a strategy with different trading rates as in Example \ref{eg:strategy_easy} or a more sophisticated strategy. \cite{alfonsi2016multivariate} consider this problem in discrete time with linear instantaneous price impact. Their Proposition 2.6 states that absence of arbitrage in the sense of equation (\ref{eq:nodynamicarb_ND}) is equivalent to the condition that the elementwise product of strength of impact and the decay kernel corresponds to a positive definite matrix-valued function.

\subsection{Linearity of market impact}
\label{sec:bounded_linearity}

\cite{gatheral2011exponential} finds that any single-asset market impact model as in equation (\ref{eq:price_1D}) is inconsistent when $f$ is non-linear and $G$ is bounded and non-increasing. We expand this proposition to the multi-asset case with cross-impact, i.e. 
\begin{lemma}
\label{lemma:finite_linear}
Assuming a price process as in (\ref{eq:price_ND}) with a bounded, non-increasing decay kernel ${\bm G}$ that is continuous around $\tau = 0$ and a non-linear market impact function $\bm{f}$. Then such a model admits price manipulation. 
\end{lemma}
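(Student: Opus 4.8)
The plan is to follow the approach of \cite{gatheral2011exponential} already used for the other results of Section~\ref{sec:generalconstraints}, namely to exhibit an explicit round trip whose cost is negative whenever $\bm f$ is non-linear. By Lemma~\ref{lemma:antisym_ND} and Corollary~\ref{cor:absence} I may assume throughout that $\bm f$ is odd with $\bm f(\bm{0})=\bm{0}$, so that ``$\bm f$ non-linear'' means some component $f^{IJ}$ is not of the form $v\mapsto\beta v$; equivalently there exist rates $0<w_1,w_2$ with $f^{IJ}(w_1)/w_1\neq f^{IJ}(w_2)/w_2$. I would also record the elementary consequence of \eqref{eq:nodynamicarb_ND} restricted to one asset, $v f^{ii}(v)\ge 0$, which in particular forces any linear self-impact component to have non-negative slope. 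I then distinguish two cases.

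\emph{First case: some self-impact function $f^{II}$ is non-linear.} Here I restrict to a strategy active only in asset $I$ (admissible, as noted in Section~\ref{sec:generalconstraints}), so the claim reduces to the single-asset statement of \cite{gatheral2011exponential}, which I would re-prove in the appendix in the style of Lemma~\ref{lemma:antisym_ND}. Concretely, I take the in--out strategy of Example~\ref{eg:strategy_easy} restricted to asset $I$, building a block at rate $w_1$ and unwinding it at rate $w_2$, with total horizon $T$. Since $G^{II}$ is bounded and right-continuous at $0$ with $G^{II}(0)=1$, we have $G^{II}(t-s)\to 1$ uniformly as $T\downarrow 0$, and because the cost of this strategy is of order $T^{2}$ one gets $C(\Pi)=T^{2}\big(c^{II}+o(1)\big)$ with $c^{II}$ the ``permanent-impact'' coefficient obtained by setting $G^{II}\equiv 1$. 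A direct computation from the cost decomposition in Example~\ref{eg:strategy_easy} gives $c^{II}=\tfrac{1}{2}\big(w_1w_2/(w_1+w_2)\big)^{2}\big(f^{II}(w_2)/w_2-f^{II}(w_1)/w_1\big)$, which is nonzero by the choice of $w_1,w_2$ and is made strictly negative after, if necessary, swapping $w_1$ and $w_2$. Hence $C(\Pi)<0$ for $T$ small.

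\emph{Second case: all self-impact functions are linear, but some cross-impact function $f^{ab}$ with $a\neq b$ is non-linear.} Now I use the in--out strategy of Example~\ref{eg:strategy_easy} with asset $b$ executed exactly as in the first case (build at $w_1$, unwind at $w_2$) and asset $a$ executed proportionally, $\dot{x}^a_t=\lambda\,\dot{x}^b_t$, and again send $T\downarrow 0$, so that $C(\Pi)=T^{2}\big(\sum_{ij}c^{ij}+o(1)\big)$ with $c^{ij}$ the permanent-impact coefficient for $G^{ij}\equiv 1$. Two features make this manageable: because $x^a$ is a fixed multiple of $x^b$, the permanent-impact cost of a round trip with linear impact vanishes, so the diagonal coefficients $c^{aa},c^{bb}$ are zero, and so is $c^{ba}$ whenever $f^{ba}$ is linear; and, by the same computation as above, $c^{ab}=\tfrac{\lambda}{2}\big(w_1w_2/(w_1+w_2)\big)^{2}\big(f^{ab}(w_2)/w_2-f^{ab}(w_1)/w_1\big)$, which is nonzero because $f^{ab}$ is non-linear and whose sign I control freely through the sign of $\lambda$. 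Choosing that sign so that $c^{ab}<0$ yields $C(\Pi)<0$ for $T$ small, i.e. price manipulation.

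The main obstacle is the sub-case in which $f^{ba}$ is itself non-linear, so that $c^{ba}$ need not vanish and could, for the rates $w_1,w_2$ selected to expose the non-linearity of $f^{ab}$, cancel the gain $c^{ab}$. My plan here is to exploit the remaining freedom in $\lambda$: writing $g^{ij}(v)=f^{ij}(v)/v$, the surviving cross contribution is $c^{ab}+c^{ba}=\tfrac{\lambda}{2}\big(w_1w_2/(w_1+w_2)\big)^{2}\big([g^{ab}(w_2)-g^{ab}(w_1)]+[g^{ba}(\lambda w_2)-g^{ba}(\lambda w_1)]\big)$, and for $|\lambda|$ small the second bracket is negligible against the fixed, nonzero first one, so the brace acquires a definite sign that the sign of $\lambda$ then turns negative; in the degenerate situation where instead the second bracket dominates as $\lambda\to0$, it is itself of one sign and directly exposes the non-linearity of $f^{ba}$. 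I expect the heaviest bookkeeping to lie in verifying that all four permanent-impact coefficients, together with the $o(T^2)$ kernel corrections, can be controlled simultaneously under the chosen $\lambda$; a parallel and slightly more restrictive route is the series-expansion argument in the spirit of \cite{gatheral2010no} mentioned at the start of Section~\ref{sec:generalconstraints}.
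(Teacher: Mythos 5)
Your overall route is the same as the paper's: reduce to the permanent-impact limit by sending $T\downarrow 0$ (using boundedness, monotonicity and right-continuity of $\bm G$ at $0$), dispose of self-impact via the single-asset linearity result, and then probe the cross terms with the proportional in--out strategy of Example \ref{eg:strategy_easy}, tuning $\lambda$ and the two rates. Your leading-order coefficients are correct (your $c^{II}$ and $c^{ab}$, $c^{ba}$ agree with what the paper's cost decomposition gives), and your Case 1 is fine. The difference, and the problem, is the very last step. The paper concludes by observing that absence of manipulation forces $\alpha(v)=\lambda f^{ab}(v)+f^{ba}(\lambda v)$ to be linear in $v$ for \emph{every} $\lambda$, and then kills all higher coefficients through the power-series expansion (\ref{eq:alpha}); you instead try a small-$|\lambda|$ asymptotic dichotomy, and that dichotomy is not exhaustive.

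Concretely, since $g^{ba}(v)=f^{ba}(v)/v$ is even, your total bracket $B(|\lambda|)=[g^{ab}(w_2)-g^{ab}(w_1)]+[g^{ba}(\lambda w_2)-g^{ba}(\lambda w_1)]$ depends on $|\lambda|$ only, and your argument succeeds precisely when some choice of $w_1,w_2,|\lambda|$ makes $B\neq 0$. You cover the regimes where the second bracket vanishes or blows up as $\lambda\to 0$, but not the regime where it stays bounded and cancels the first bracket exactly. This can genuinely happen: take $f^{ab}(v)=v\,(\beta+\alpha\ln|v|)$ and $f^{ba}(v)=v\,(\beta'-\alpha\ln|v|)$ (odd, vanishing at $0$, both non-linear). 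Then $g^{ab}(w_2)-g^{ab}(w_1)=\alpha\ln(w_2/w_1)$ while $g^{ba}(\lambda w_2)-g^{ba}(\lambda w_1)=-\alpha\ln(w_2/w_1)$ for every $\lambda$, so $B\equiv 0$ for all $w_1,w_2,\lambda$: every strategy in your family has zero leading-order cost, and no price manipulation is exhibited even though both cross-impact functions are non-linear. (Equivalently, for this pair $\lambda f^{ab}(v)+f^{ba}(\lambda v)$ \emph{is} linear in $v$ for each fixed $\lambda$, which is why the paper's proof needs the series-expansion step --- i.e.\ an implicit analyticity assumption on $\bm f$ --- to finish.) To close your argument you must either import that same expandability assumption and argue as in (\ref{eq:alpha}), or bring in a strategy outside the proportional two-phase class (for instance a non-proportional or three-phase strategy in the spirit of Example \ref{eg:linperm_asymm}) to rule out this cancellation family.
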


As a corollary of Lemma \ref{lemma:finite_linear} we can extend Lemma 4.1 in \cite{gatheral2010no} for self-impact to the case with cross-impact.
\begin{corollary}
\label{cor:exp_linear}
A price process as in (\ref{eq:price_ND}) with self- and cross-impact that decays exponentially at different rates and instantaneous price impact that is non-linear, admits price manipulations.
\end{corollary}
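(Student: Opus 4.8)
The plan is to recognise Corollary \ref{cor:exp_linear} as a direct specialisation of Lemma \ref{lemma:finite_linear}: exponentially decaying self- and cross-impact is just a particular instance of a bounded, non-increasing decay kernel that is continuous at the origin, so the asserted price manipulation is delivered by Lemma \ref{lemma:finite_linear} with essentially no extra work.

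First I would make this explicit. Write $G^{ij}(\tau) = e^{-\rho^{ij}\tau}$ with decay rates $\rho^{ij} \geq 0$, not necessarily equal across the pairs $ij$; nonnegativity of the rates is exactly what ``decays'' means, and as already noted in Section \ref{sec:generalconstraints} a negative rate is in any case arbitrageable through an elementary buy--hold--sell strategy even in the single-asset setting. For such a kernel one has $|G^{ij}(\tau)| = e^{-\rho^{ij}\tau} \leq 1$ for every $\tau \in [0,\infty)$, so $\bm G$ is bounded (with $U=1$); each map $\tau \mapsto e^{-\rho^{ij}\tau}$ is non-increasing on $[0,\infty)$; and each component is continuous at $\tau = 0$ with $G^{ij}(0)=1$, consistent with the normalisation used throughout. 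Hence $\bm G$ satisfies every hypothesis of Lemma \ref{lemma:finite_linear}, and since the instantaneous impact $\bm f$ is non-linear by assumption, Lemma \ref{lemma:finite_linear} produces a round-trip strategy whose expected cost (\ref{eq:cost_ND}) is strictly negative, i.e.\ a price manipulation. That is the whole argument.

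I would then close with a brief remark on scope. The distinctness of the rates $\rho^{ij}$ is not needed for the conclusion: it is mentioned only to stress that the obstruction persists in the genuinely multivariate regime, where one cannot factor a single scalar kernel out of (\ref{eq:nodynamicarb_ND}) as in Example \ref{eg:size_easy}, and when the strategy is taken to be active in a single asset the statement collapses to Lemma 4.1 of \cite{gatheral2010no} for self-impact. There is no real obstacle to overcome here — all of the content already sits in Lemma \ref{lemma:finite_linear} (hence ultimately in the oscillating-strategy construction of \cite{gatheral2011exponential}), and the corollary merely records that the exponential family of \cite{obizhaeva2013optimal} lies inside the class of kernels to which that lemma applies.
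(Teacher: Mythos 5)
Your proposal is correct and matches the paper's own reasoning: the corollary is stated there as an immediate consequence of Lemma \ref{lemma:finite_linear}, since $G^{ij}(\tau)=e^{-\rho^{ij}\tau}$ is bounded, non-increasing and continuous at $\tau=0$, exactly as you verify. Your closing remark that distinctness of the rates adds nothing to the argument is also consistent with the paper's treatment.
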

We obtain the same corollary for purely permanent impact by taking the limit $\rho^{ij} \rightarrow 0^+$ in the case of exponential decay $G^{ij}(t-s) = e^{- \rho^{ij} \left( t-s \right)}$, as already observed in \cite{huberman2004price}:
\begin{corollary}
Nonlinear permanent self- and cross-asset market impact is inconsistent with the principle of no-dynamic-arbitrage.  
\end{corollary}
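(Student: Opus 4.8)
The plan is to obtain this corollary from Corollary~\ref{cor:exp_linear} by letting all the exponential decay rates tend to zero, as anticipated in the text, noting first that the ``permanent'' kernel $G^{ij}(\tau)\equiv 1$ is itself admissible for the results of this section: it is bounded, right-continuous at $\tau=0$, normalized to $G^{ij}(0)=1$, and non-increasing in the weak sense (it is constant). If the proof of Lemma~\ref{lemma:finite_linear} uses only weak monotonicity of $\bm G$, then $\bm G\equiv\bm 1$ falls directly under that lemma and there is nothing more to do; since the elementary strategies of Example~\ref{eg:size_easy} do exploit \emph{strict} decay, I would not rely on this shortcut and would instead give a robust limiting argument.

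For that argument, fix distinct constants $\rho^{ij}>0$ and put $G^{ij}_{\varepsilon}(\tau)=e^{-\varepsilon\rho^{ij}\tau}$ for $\varepsilon>0$: each $\bm G_{\varepsilon}$ is strictly decreasing, bounded by $1$, continuous at $\tau=0$, and $\bm G_{\varepsilon}\to\bm 1$ uniformly on compact time intervals as $\varepsilon\to0^{+}$. For every $\varepsilon>0$, Corollary~\ref{cor:exp_linear} (hence Lemma~\ref{lemma:finite_linear}) produces a round-trip with strictly negative cost $C_{\varepsilon}$, and its construction in Appendix~\ref{app:proofs} stays within a fixed finite-dimensional family of strategies (the two-phase in--out strategies of Example~\ref{eg:strategy_easy}, or the slightly richer family used there) whose parameters may be taken in a compact set $\mathcal K$ once the horizon is normalized to $T=1$. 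Because $|G^{ij}_{\varepsilon}|\le 1$, dominated convergence in the double integral defining the cost makes $(\varepsilon,\Pi)\mapsto C_{\varepsilon}(\Pi)$ continuous on $[0,\varepsilon_{0}]\times\mathcal K$, uniformly in $\Pi$; hence $m(\varepsilon):=\min_{\Pi\in\mathcal K}C_{\varepsilon}(\Pi)$ is continuous, $m(\varepsilon)<0$ for $\varepsilon>0$, and therefore $m(0)\le 0$, i.e.\ under $\bm G\equiv\bm 1$ there is a round-trip of non-positive cost.

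Upgrading ``non-positive'' to ``strictly negative'' is the step I expect to be the main obstacle: one must rule out that the manipulation profit washes out as the kernel flattens ($\liminf_{\varepsilon\to0^{+}}m(\varepsilon)<0$), which should follow by tracking the $\varepsilon$-dependence of the strategy built for Lemma~\ref{lemma:finite_linear}. For most $\bm f$ this can be checked by hand from the permanent-impact cost of the two-phase strategy of Example~\ref{eg:strategy_easy}: with $G^{ij}\equiv1$ and the round-trip constraint $v_{i,\RN{2}}=-\alpha\,v_{i,\RN{1}}$, $\alpha=\Theta/(T-\Theta)$, the cost collapses to $C(\Pi)=\tfrac{\Theta^{2}}{2}\sum_{i,j}v_{i,\RN{1}}\big(\alpha^{-1}f^{ij}(\alpha v_{j,\RN{1}})-f^{ij}(v_{j,\RN{1}})\big)$. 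If some self-impact $f^{kk}$ is non-linear, a round-trip in asset $k$ alone gives $C(\Pi)=\tfrac{\Theta^{2}v^{2}}{2}\big(f^{kk}(\alpha v)/(\alpha v)-f^{kk}(v)/v\big)$, which is strictly negative for a suitable choice of $v,\alpha>0$ since $v\mapsto f^{kk}(v)/v$ is non-constant, recovering \cite{gatheral2010no}. Otherwise all $f^{ii}$ are linear, the $(i,i)$ terms drop out, and using oddness of $\bm f$ one checks that $C(\Pi)$ is an odd function of $v_{a,\RN{1}}$; it is therefore strictly negative for some admissible $(\alpha,v_{a,\RN{1}},v_{b,\RN{1}})$ unless it vanishes identically, and a short computation shows it vanishes identically only in the degenerate case in which $f^{ab}+f^{ba}$ is linear and $v\mapsto f^{ab}(\alpha v)/(\alpha v)-f^{ab}(v)/v$ is constant for every $\alpha$ --- i.e.\ $f^{ab}$ of ``$cv+kv\ln|v|$'' type, which is still non-linear but invisible to this particular strategy. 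For that residual case one invokes the limiting argument above applied to $\bm G_{\varepsilon}$. In every case the classical conclusion of \cite{huberman2004price}, that non-linear permanent impact is incompatible with no-dynamic-arbitrage, carries over to the multi-asset setting with cross-impact.
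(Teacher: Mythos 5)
Your opening observation is, in fact, the paper's entire proof: the corollary is obtained by letting $\rho^{ij}\to 0^+$ in the exponential case of Corollary \ref{cor:exp_linear}, or equivalently by noting that the permanent kernel $G^{ij}\equiv 1$ is bounded, (weakly) non-increasing, normalized and right-continuous at $\tau=0$, so Lemma \ref{lemma:finite_linear} applies verbatim, as already observed in \cite{huberman2004price}. Your reason for discarding this shortcut is misplaced: strict decay is used only in Example \ref{eg:size_easy} (to sign the factored-out bracket), not in the appendix proof of Lemma \ref{lemma:finite_linear}, which uses only $G^{ij}(\tau)\le G^{ij}(0)=1$ and $G^{ij}(\tau)\ge 1-\varepsilon$ for $\tau\le T_\varepsilon$ --- both of which hold exactly (for every $\varepsilon$) when the kernel is constant. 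So the direct appeal to Lemma \ref{lemma:finite_linear} is all that is needed, and it is what the paper intends.

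The alternative route you actually commit to has a genuine hole at precisely the point you flag. The compactness/dominated-convergence argument with $\bm G_\varepsilon$ only yields a round trip of \emph{non-positive} cost at $\varepsilon=0$; price manipulation requires strictly negative cost, i.e. $\liminf_{\varepsilon\to 0^+}m(\varepsilon)<0$, which you never establish. Your hands-on computation then covers nonlinear self-impact and most cross-impact nonlinearities, but in the residual case you isolate (e.g. $f^{ab}(v)=kv\ln|v|$, $f^{ba}(v)=-kv\ln|v|$ with linear self-impact, for which the two-phase cost under $G\equiv 1$ vanishes identically) you ``invoke the limiting argument above'' --- exactly the step you conceded is insufficient --- so strict negativity is never obtained there and the argument is circular. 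If you insist on avoiding the shortcut, that residual case can be closed without limits: under $G\equiv 1$ the three-phase strategy of Example \ref{eg:linperm_asymm} evaluates $f^{ab}$ and $f^{ba}$ only at $\pm v^b$ and $\pm v^a$, so equation (\ref{eq:cost_asymm_perm}) applies with effective slopes $k\ln v^b$ and $-k\ln v^a$, giving $C(\Pi)=-k\,v^a v^b\frac{T^2}{18}\ln(v^a v^b)$, which is negative for a suitable choice of $v^a v^b$. But the cleanest repair is simply to keep the first argument you set aside.
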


\subsection{Symmetry of cross-impact}
\label{sec:finite_symmetry}

Again we assume that $\bm{G}(\tau)$ is bounded, non-increasing and continuous around $\tau = 0$ and therefore that impact is linear, i.e. $f^{ij}(v) = \eta^{ij} v$, for otherwise our model is inconsistent, as shown in the previous section. We will further show that in this case impact needs to be symmetric, i.e. $\eta^{ij} = \eta^{ji}$, in order to avoid price manipulations. 
 
\begin{example}{An asymmetric strategy with purely permanent and linear impact.}
\label{eg:linperm_asymm}
\\
Suppose market impact is linear and permanent, i.e. $f^{ij}(v) = \eta^{ij} v$ and $G^{ij}(\tau) = 1 \: \forall \: i,j$. Then the cost of trading in the single-asset case only depends on the initial and final positions ${x}_0$ and ${x}_T$. If there is cross-impact between two or more assets, there is also an interaction term between the trading rates in different assets, i.e.
\begin{align}
\nonumber
\label{eq:cost_lin_perm1}
C(\Pi) 	&=	 \sum_{i,j}{ \int_0^T{v^i_t \mathrm{d}t  \int_0^t{ \eta^{ij} v^j_s \mathrm{d}s } } } \\
				&=	 \sum_i{ \frac{\eta^{ii}}{2} (x^i_T - x^i_0) ^2 + \sum_{j \neq i}{ \eta^{ij} \int_0^T{v^i_t \mathrm{d}t  \int_0^t{  v^j_s \mathrm{d}s } }    }  } 
\end{align}
and while the first sum with the self-impact terms disappears for a round-trip strategy since $\bm{x}_0 = \bm{x}_T$, this is not generally the case for the second sum with the terms due to cross-impact. To see this, let us consider a different round-trip strategy $\Pi$ in two assets, which is now asymmetric and lasts over three phases: 
\begin{align}
v^a_t = \begin{cases} 
v^a \quad &\text{for} \quad 0 \leq t \leq T/3 \\
0 \quad &\text{for} \quad T/3 < t \leq 2T/3 \\
-v^a \quad &\text{for} \quad 2T/3 < t \leq T \\
\end{cases} \quad , \quad 
v^b_t = \begin{cases} 
-v^b \quad &\text{for} \quad 0 \leq t \leq T/3 \\
v^b \quad &\text{for} \quad T/3 < t \leq 2T/3 \\
0 \quad &\text{for} \quad 2T/3 < t \leq T \\
\end{cases} \: ,
\label{eq:strategy_asymm}
\end{align}
as illustrated in Figure \ref{fig:strategies3_8}. 
While self-impact cancels out when we calculate $C(\Pi)$, the asymmetry in our strategy makes for a non-trivial total cost that stems from cross-impact:
\begin{equation}
C(\Pi) = v^a v^b \frac{T^2}{18} (\eta^{ba} - \eta^{ab}) .
\label{eq:cost_asymm_perm}
\end{equation}
If $\eta^{ba} < \eta^{ab}$ this gives a negative cost and likewise when $\eta^{ba} > \eta^{ab}$ by interchanging assets $a \leftrightarrow b$ in the strategy (\ref{eq:strategy_asymm}). Therefore it follows that cross-impact needs to be symmetric with respect to asset pairs in order to exclude arbitrage opportunities, as observed in \cite{huberman2004price}. 

\end{example}

In fact we can expand this result to the transient impact case:

\begin{lemma}
\label{lemma:linbounded_symm}
If decay of market impact $\bm{G}(\tau)$ is bounded, non-increasing and continuous around $\tau = 0$ and $f^{ij}(v) = \eta^{ij} v$ is linear, absence of dynamic arbitrage requires that \begin{equation}
\eta^{ij} = \eta^{ji} \quad \forall \: i,j \: .
\label{eq:linbounded_symm}
\end{equation} 
\end{lemma}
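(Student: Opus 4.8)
The plan is to recycle the asymmetric three-phase round-trip strategy $\Pi$ of Example \ref{eg:linperm_asymm}, but run it over a short horizon $[0,T]$ against a general bounded, non-increasing, continuous-around-$\tau=0$ kernel $\bm G$, and to show that as $T\to 0^{+}$ the cost is, to leading order in $T$, \emph{exactly} the purely-permanent cost $v^a v^b\,\frac{T^2}{18}\,(\eta^{ba}-\eta^{ab})$ already computed in (\ref{eq:cost_asymm_perm}). By the remark at the start of Section \ref{sec:generalconstraints} it is enough to treat two assets $a,b$; for $N>2$ assets one simply lets $\Pi$ trade only in the pair $(i,j)$ of interest. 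Recall also that by Lemma \ref{lemma:finite_linear} we may already assume $f^{ij}(v)=\eta^{ij}v$.

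The first step is to split $G^{ij}(\tau)=1-\bigl(1-G^{ij}(\tau)\bigr)$ in the cost $C(\Pi)=\sum_{i,j}\eta^{ij}\int_0^T \dot x^i_t\,\mathrm dt\int_0^t G^{ij}(t-s)\,\dot x^j_s\,\mathrm ds$. The ``$1$''-part is nothing but the permanent-kernel cost of $\Pi$: its self-impact contributions $C^{aa},C^{bb}$ vanish because $\Pi$ is a round trip (the computation $\frac{\eta^{ii}}{2}(x^i_T-x^i_0)^2=0$ displayed before (\ref{eq:cost_lin_perm1})), and its cross-impact contribution is $v^a v^b\,\frac{T^2}{18}(\eta^{ba}-\eta^{ab})$ by (\ref{eq:cost_asymm_perm}). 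The second step is to control the remainder $R(T):=-\sum_{i,j}\eta^{ij}\int_0^T \dot x^i_t\,\mathrm dt\int_0^t \bigl(1-G^{ij}(t-s)\bigr)\dot x^j_s\,\mathrm ds$. Since each $G^{ij}$ is non-increasing with $G^{ij}(0)=1$, for every $\tau\in[0,T]$ one has $0\le 1-G^{ij}(\tau)\le 1-G^{ij}(T)=:\omega_{ij}(T)$, and $\omega_{ij}(T)\to 0$ as $T\to0^{+}$ by right-continuity at the origin; together with $\int_0^T\mathrm dt\int_0^t|\dot x^i_t\dot x^j_s|\,\mathrm ds\le \tfrac12 T^2 V^2$, where $V=\max(v^a,v^b)$, this gives $|R(T)|\le \tfrac12 T^2 V^2\bigl(\max_{i,j}\omega_{ij}(T)\bigr)\sum_{i,j}|\eta^{ij}|=o(T^2)$.

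Combining the two steps, $T^{-2}C(\Pi)=\tfrac{1}{18}v^a v^b(\eta^{ba}-\eta^{ab})+o(1)\to \tfrac{1}{18}v^a v^b(\eta^{ba}-\eta^{ab})$ as $T\to0^{+}$. Suppose $\eta^{ij}\ne\eta^{ji}$ for some pair; relabel the two assets so that $\eta^{ba}<\eta^{ab}$ and take $v^a,v^b>0$. Then $C(\Pi)<0$ for all sufficiently small $T>0$, i.e. $\Pi$ is a price manipulation, contradicting (\ref{eq:nodynamicarb}). Hence $\eta^{ij}=\eta^{ji}$ for all $i,j$. (Equivalently, one may perform the limit after rescaling $t=Tu,\ s=Tw$, under which $G^{ij}(T(u-w))\to 1$ uniformly on $0\le w\le u\le 1$ — exactly the short-time scaling used in \cite{gatheral2011exponential}.)

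I expect the only delicate point to be the uniform estimate on $1-G^{ij}$ near the origin: it is here that non-increasingness, continuity around $\tau=0$, and the normalization $G^{ij}(0)=1$ all enter, and it is the reason the lemma needs these hypotheses on $\bm G$ rather than merely the linearity supplied by Lemma \ref{lemma:finite_linear}. Everything else is the bookkeeping of Example \ref{eg:linperm_asymm} together with the crude bound $\sum_{i,j}|\eta^{ij}|<\infty$.
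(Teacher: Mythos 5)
Your proposal is correct and follows essentially the same route as the paper's proof: the same asymmetric three-phase round-trip of Example \ref{eg:linperm_asymm}, with the hypotheses on $\bm{G}$ (normalization $G^{ij}(0)=1$, monotonicity, right-continuity at $0$) used to make the kernel effectively constant over a short horizon so that the cost reduces to the permanent-impact expression $v^a v^b \frac{T^2}{18}(\eta^{ba}-\eta^{ab})$. The only difference is presentational — you split $G^{ij}=1-(1-G^{ij})$ and take a clean $o(T^2)$ limit as $T\to 0^{+}$, whereas the paper fixes an explicit $\varepsilon$ depending on the $\eta^{ij}$'s and bounds each integral to get $C(\Pi)\leq v^a v^b\frac{T^2}{36}(\eta^{ba}-\eta^{ab})<0$ at a finite $T=T_\varepsilon$.
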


Let us reconsider Example \ref{eg:size_easy} taking into account linearity and symmetry of cross-impact as shown above. In this case Equation (\ref{eq:nodynamicarb_inoutsame}) simplifies to  
\begin{equation}
v^2_a \eta^{aa} + v^2_b \eta^{bb} - 2 v_a v_b \eta^{\text{cross}} \geq 0
\label{eq:nodynamicarb_inoutsame_symmetric}
\end{equation}
and minimizing the cost constrains the strength of cross-impact $\eta^{\text{cross}} = \eta^{ab} = \eta^{ba}$ as 
\begin{equation}
\eta^{\text{cross}} \leq \sqrt{\eta^{aa}\eta^{bb}} \: , 
\label{eq:nodynamicarb_size}
\end{equation}
in agreement with Proposition 3.7.(b) of \cite{alfonsi2016multivariate} and equivalent to the condition for a symmetric $2 \times 2$ matrix to be positive-semidefinite.  

\subsection{Exponential decay}
\label{sec:exponential}

The conditions of linearity and symmetry of cross-impact are necessary for absence of arbitrage, but are they also sufficient? 

\begin{example}{An asymmetric strategy with symmetric, exponentially decaying linear impact.}
\label{eg:linexp_asymm}

Let us re-consider the strategy in Eq. (\ref{eq:strategy_asymm}) with exponentially decaying impact $G^{ij}(t-s) = e^{- \rho^{ij} \left( t-s \right)}$ and a linear instantaneous impact function $f^{ij}(v) = \eta^{ij} v$ that is now symmetric with $\eta^{ab} = \eta^{ba} = \eta^{\text{cross}}$. The cost terms for self-impact are now
\begin{align}
C^{aa} &= \frac{\eta^{aa} v_a^2 }{(\rho^{aa})^2} \left[ - e^{-\rho^{aa} T } +2 e^{-2 \rho^{aa} T / 3} + e^{- \rho^{aa} T / 3} - 2 + \frac{2 \rho^{aa} T }{3} \right]  \nonumber  \\ 
C^{bb} &= \frac{\eta^{bb} v_b^2 }{(\rho^{bb})^2} \left[ - e^{-2 \rho^{bb} T / 3} + 4 e^{- \rho^{bb} T / 3} - 3 + \frac{2 \rho^{bb} T }{3} \right]  \: . 
\label{eq:cost_asymm_self_exp}
\end{align}
and likewise for cross-impact
\begin{align}
C^{ab} &= \frac{\eta^{\text{cross}} v_a v_b }{(\rho^{ab})^2} \left[ -\frac{ \rho^{ab} T}{3} + 2 e^{- \rho^{ab} T / 3} - 3 e^{- 2\rho^{ab} T/3} + e^{- \rho^{ab} T}  \right]          \nonumber \\
C^{ba} &=  \frac{\eta^{\text{cross}} v_a v_b }{(\rho^{ba})^2} \left[ 2 -\frac{ \rho^{ba} T}{3} -3 e^{- \rho^{ba} T / 3} + e^{- 2 \rho^{ba} T/3}  \right]         \: . 
\label{eq:cost_asymm_cross_exp}
\end{align}
When we develop the terms in squared brackets in (\ref{eq:cost_asymm_self_exp}) and (\ref{eq:cost_asymm_cross_exp}) in powers of $\rho^{ij}T$ all terms of order $(\rho^{ij}T)^0$ and $(\rho^{ij}T)^1$ cancel out, while terms proportional to $(\rho^{ij}T)^2$ sum to $0$ thanks to the symmetry of instantaneous cross-impact. The cost to the first non-zero order of $\rho^{ij}T$ is then
\begin{equation}
C(\Pi) = \frac{T^3}{6} \left[ \frac{2}{3} \eta^{aa} v_a^2 \rho^{aa} + \frac{4}{27} \eta^{bb} v_b^2 \rho^{bb} - \frac{5}{27} \eta^{\text{cross}} v_a v_b \left( \rho^{ab} + \rho^{ba} \right) \right]
									+ \sum_{i,j} \mathcal{O}\left( (\rho^{ij})^2 T^4 \right) 
\label{eq:cost_asymm_exp}
\end{equation}
and when $\rho^{ab}$ or $\rho^{ba}$ is large enough\footnote{While keeping the product of $\rho^{ij}T$ small for all $ij$.} compared to the other terms the cost can still be negative. 
For absence of price manipulations we therefore also require further constraints on the speed of decay described by $\bm{G}$. 
\end{example}

This is in agreement with the results of \cite{alfonsi2016multivariate} in discrete time. Their Proposition 3.7 proves that the conditions of symmetry $\eta^{ab} = \eta^{ba}$, and a non-increasing decay kernel, i.e. $\min(\rho^{ab}, \rho^{ba}) \geq \frac{1}{2}(\rho^{aa} + \rho^{bb})$ and $\frac{1}{4}(\eta^{ab}\rho^{ab} + \eta^{ba} \rho^{ba})^2 \leq \eta^{aa}\rho^{aa} \eta^{bb} \rho^{bb}$, are sufficient for the absence of arbitrage. We complement this result in Lemma \ref{lemma:linbounded_symm} by showing that symmetry $\eta^{ij} = \eta^{ji}$ is indeed necessary for any decay kernel $\bm{G}(\tau)$ that fulfills our conditions of being bounded, non-increasing and continuous around $\tau = 0$. Note that this excludes kernels where for some $ij$ $G^{ij}(0)=0$ but $G^{ij}(\tau)>0$ for some $\tau > 0$. Indeed Example 3 in \cite{alfonsi2016multivariate} has a kernel that is asymmetric for $\tau > 0$ and which does not allow price manipulation.

\FloatBarrier
\section{Empirical evidence of cross-impact}
\label{sec:empirical}

\FloatBarrier
\subsection{Market Structure of MOT}
\label{sec:mot}

For the empirical analysis we consider Italian sovereign bonds traded on the retail platform ``Mercato telematico delle obbligazioni e dei titoli di Stato'' (MOT). 
We choose to estimate cross-impact between bonds instead of equities since we expect the strength of cross-impact among sovereign bonds of the same issuing country, especially of similar maturity, to be bigger than the one between e.g. stocks or indices. Sovereign bonds of one country typically have a very similar underlying risk and their prices are implicitly connected via the yield curve, a link that we deem stronger than e.g. a common factor between stocks of the same sector. 

The secondary market for European sovereign bonds is divided into an opaque over-the-counter market (OTC) and an observable exchange-traded market. 
The Italian securities and exchange Commission CONSOB publishes a bi-annual report listing the share in trading of Italian government bonds separated per trading venue\footnote{CONSOB, Bollettino Statistico Nr. 8, March 2016, available at \url{http://www.consob.it/web/area-pubblica/bollettino-statistico}}. For the year 2014 (2015) the share of OTC trading has been $58.8\%$ ($59.1\%$), while $45.6\%$ ($44.8\%$) of trading on platforms took place on the inter-dealer platform MTS. MOT is the third-largest platform by traded value with $8.7\%$ ($8.8\%$) of traded value excluding the OTC market in 2014 (2015). 
Most of the literature for the Italian and European government bonds market focuses on MTS, with the exception of \cite{linciano2014liquidity} who compare the liquidity of dual-listed corporate bonds across MOT and the EuroTLX platform. 
\cite{darbha2013microstructure} review the market microstructure of MTS in the context of the market for European sovereign bonds and discuss several liquidity measures based on the limit order book, trades or bond characteristics. 
They note that MTS ``normally has a few trades per bond per day, even for the most liquid government bonds''. Indeed, due to large minimum sizes, for most titles there is on average less than one transaction per day on MTS, making studies of market impact difficult.
\cite{dufour2012permanent} overcome this issue by building impulse response functions from regressions of returns on order flow at 10 second intervals to study permanent market impact. In a different approach \cite{schneider2016has} use a measure of (virtual) mechanical price impact along with other liquidity measures calculated from the limit order book to detect illiquidity shocks that can be modeled as a self- and cross-exciting Hawkes process in and across Italian sovereign bonds.

Instead in this paper we focus on MOT where we observe a sufficient number of (smaller) trades as well as an active limit order book. 
Italian government bonds are traded on the DomesticMOT segment of MOT where the trading day is divided into an opening auction from 8:00 to 9:00 followed\footnote{The conclusion of contracts from the opening auction happens at a random time between 09:00:00-09:00:59.} by a phase of continuous trading until 17:30. If certain price limits are violated during the continuous trading, a volatility auction phase is initiated for a duration of 10-11 minutes. 
MOT is organized as a continuous double auction where besides market and limit orders also partially hidden ``iceberg orders'', ``committed cross'' orders and ``block trade facilities'' are allowed. While the presence of a specialist or a bid specialist is possible, in practice this is only the case for a subset of financial sector corporate bonds not in our sample. The tick size depends on the residual lifetime and is 1 basis point of nominal size or 0.1 basis points if the residual lifetime is less or equal than two years, corresponding to $1$ or $0.1$ euro cents respectively. 

Our dataset contains all trades and limit order book (LOB) snapshots\footnote{In phases of heavy trading multiple updates of the LOB may be recorded as one update in our data. However there is at least one update per second whenever there are changes to the LOB and in the vast majority of our sample updates are more frequent.} for a selection of 60 ISINs from December 1, 2014 to February 27, 2015 and April 13, 2015 to October 16, 2015 for a total of 194 trading days. For the remainder of this paper we will focus on a set of $N=33$ fixed rate or zero-coupon Italian sovereign bonds listed in Appendix \ref{app:isins} with at least 5,000 trades throughout our sample to ensure sufficient liquidity and statistical significance of our results. To avoid intraday seasonalities we further restrict our data to 10:00 - 17:00 and discard observations when we detect a volatility auction. The average spread is smaller than 10 ticks for most of the bonds with the exception of some very long-term bonds and bonds where the tick size is 0.1 basis points.
More than $92\%$ of the orders in our sample are executed at the corresponding best bid or ask quote\footnote{The remaining $\sim 8\%$ can either be due to orders that were executed across more than one millisecond (so that they are recorded as two or more orders), missed LOB updates or exotic order types.} and thus identified as sell or buy orders respectively, while all other orders are classified according to the algorithm of \cite{lee1991inferring}. 

Let us fix notation for the estimations in the following sections. We consider the log-price $X^i_t = \log(S^i_t)$ of the mid-price of the best bid and ask quote for asset $i$ at time $t$ and calculate the return $r^i_{t,t+\Delta t}$ from time $t$ to time $t + \Delta t$ as $r^i_{t,t+\Delta t} = X^i_{t+\Delta t - \varepsilon} - X^i_{t - \varepsilon}$ for $\varepsilon \rightarrow 0^+$. $\epsilon^i_t$ is the sign of a trade (market order) and $+1$ for a buyer-initiated transaction, $-1$ for a sell, and undefined when there is no trade in the asset $i$ at time $t$. $I^i_t$ is an indicator function that is $+1$ when there is a trade in asset $i$ at time $t$ and 0 otherwise and we consider the product of an undefined trade sign with a $0$ indicator function to be 0 such that the product $\epsilon^i_t I^i_t$ is always defined and one of $\{-1, 0, +1\}$. The size of a trade $V^i_t$ is given as its nominal value in EUR and the price is reported per one asset (or contract) with a face value of 100 EUR. 
Unlike e.g. \cite{benzaquen2016dissecting} we do not de-mean the order sign in order to avoid attributing a price impact to the absence of transactions in a bond in the sense of Corollary \ref{cor:absence}. However we have verified that our results are qualitatively similar when considering de-meaned order signs $\epsilon$ or $\epsilon I$ and de-meaned returns.

\FloatBarrier
\subsection{Response function}

We define the self- and cross-response function $R^{ij}_{\Delta t}$ as the unconditional $\Delta t$-ahead return in asset $i$ controlled for the order sign of asset $j$, i.e.
\begin{equation}
R^{ij}_{\Delta t} = \mathbb{E} \left[ \left( X^i_{t+\Delta t-\varepsilon} - X^i_{t-\varepsilon} \right) \epsilon^j_t I^j_{t} \right] .
\label{eq:response_function_R}
\end{equation}
For $i = j$ we will speak of \textit{self-response} and of \textit{cross-response} for $i \neq j$. Figure \ref{fig:response_function} shows the average self- and cross-response function for all bonds in our sample and their pairings respectively.  
\begin{figure}[!ht]
    \centering
        \includegraphics[width=0.7 \textwidth]{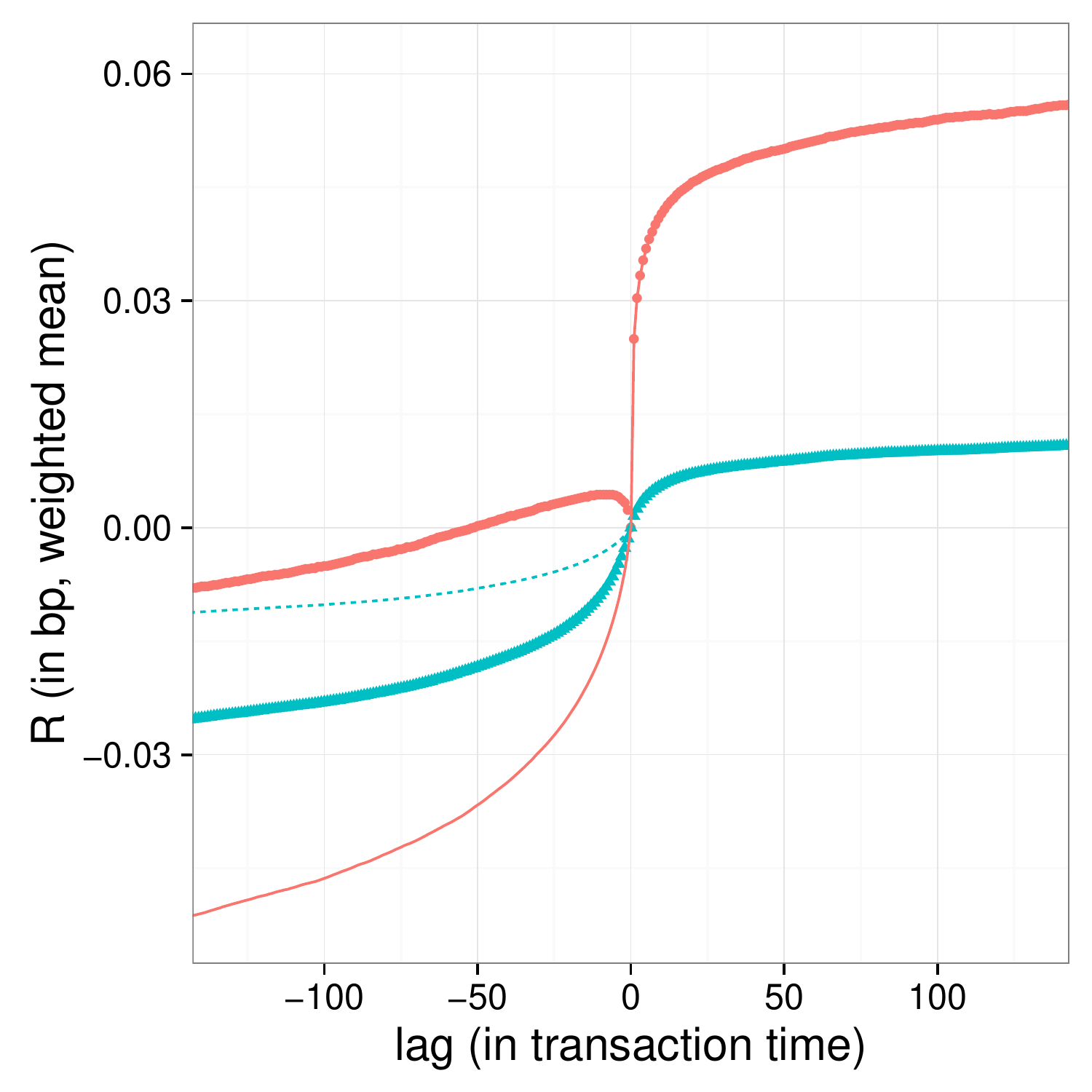}
    \caption{Plot of average self- and cross- response function $R^{ij}_{\Delta t}$ in transaction time as defined in Section \ref{sec:emp_decay}. Mean over all bonds and pairings in our sample and weighted by the number of trades in the triggering bond $j$. Self-response is shown as red dots connected by solid lines, cross-response as blue triangles connected by dashed lines. The lines correspond to the prediction from the model in Section \ref{sec:emp_decay}. }
		\label{fig:response_function} 
\end{figure} 
For positive lags $\Delta t$ we find that self-response is on average larger than cross-response by a factor of $\sim 5$, consistent with observations of \cite{benzaquen2016dissecting, wang2016average, wang2016cross}. $R^{ij}_{\Delta t = 0}$ is zero by definition, whereas for small negative $\Delta t$ we find that $R^{ij}$ is on average positive, producing a cusp at $\Delta t = 0$. We conjecture that such behavior is not observed in \cite{benzaquen2016dissecting} because of the rather large time lag of 5 minutes, corresponding to $\sim 80$ units of transaction time in Figure \ref{fig:response_function}. In the single asset case this feature is clearly present for the large-tick stock Microsoft in Figure 1 of \cite{taranto2016linear}. As shown there, the kink could be related to correlations of market order flow with past returns and indicates a forecasting power of current returns on the future order sign imbalance. Interestingly we find that the cross-response measured at negative lags is smaller (i.e. larger in absolute value) than self-response, contrary to the observations in \cite{benzaquen2016dissecting}.\footnote{We suppose that this is related to the fact that many of the bonds considered here are easily substitutable for one another.} The figure also shows the prediction from the model of the negative lag impact (see \cite{taranto2016linear} for details). We observe a clear difference with the empirical data suggesting also for cross-impact a reaction of order flow to past price dynamics of other bonds.

\FloatBarrier
\subsection{Instantaneous market impact}
\label{sec:emp_f}

We measure the instantaneous market impact function ${\bm f}(\cdot)$ as   \begin{equation}
f^{ij}(V) = \mathbb{E}\left[ r^i_{t-\varepsilon, t+2s} \epsilon^j_{t} | I^j_{t,V} = 1 \right]
\label{eq:f_measurement}
\end{equation}
which is the expected return in asset $i$ from just before a trade at time $t$ until 2 seconds after $t$, multiplied by the trade sign in asset $j$ at time $t$ and conditional on a trade in asset $j$ at time $t$ of size $V$. We have chosen the two second interval as twice the maximum time between two updates of the limit order book, i.e. we can rule out that changes in the book were not reported in our data.
For measurement purposes we bin similar trade sizes together, with the bin size chosen as a function of the number of trades in the triggering bond $j$. 
\begin{figure}[!ht]
    \centering
        \includegraphics[width=\textwidth]{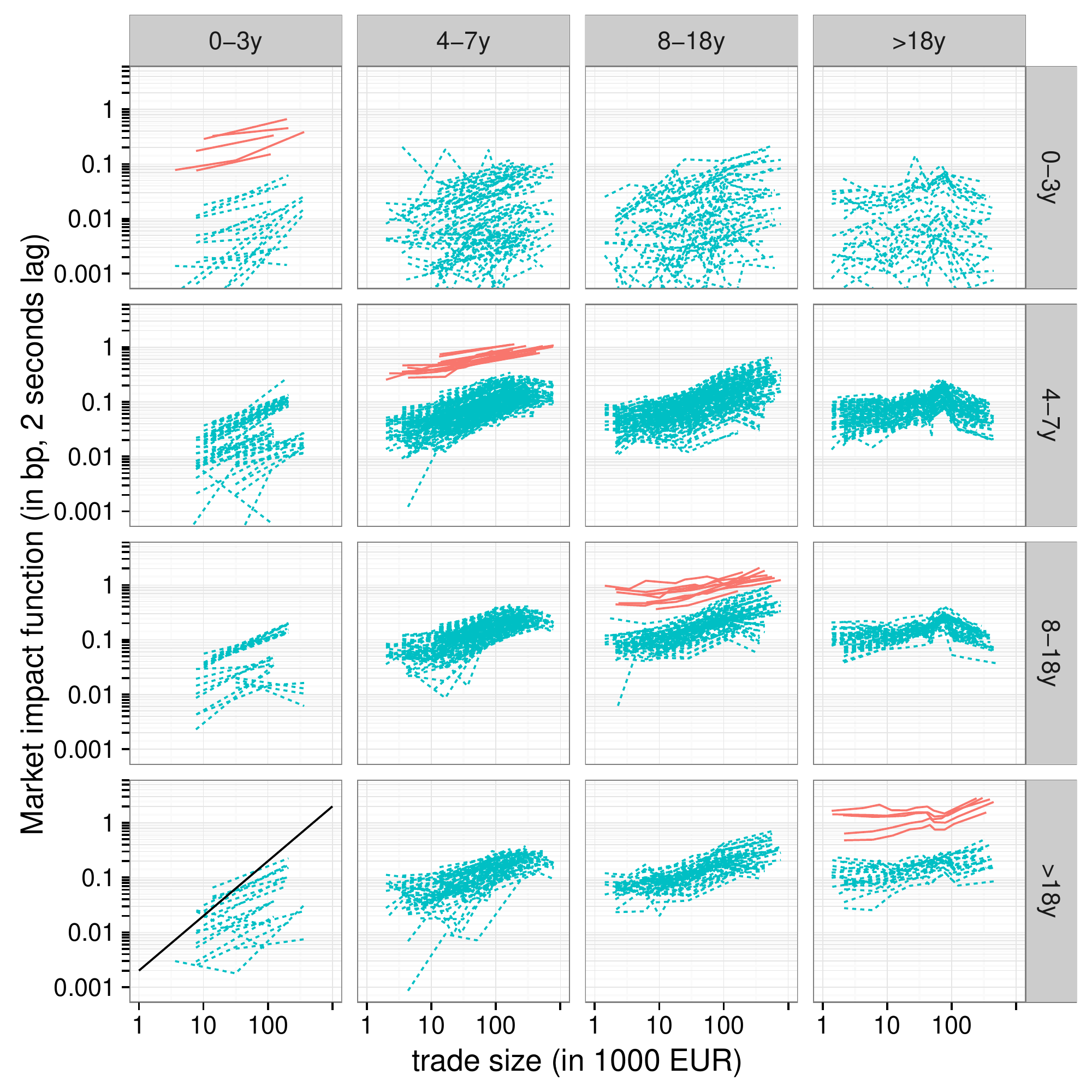}
		\caption{Plot of the average self- and cross-impact function among all pairs of bonds in our sample as a function of trade size $V$ measured in units of face value. Each line corresponds to one pairing $ij$, grouped by time-to-maturity into four categories, where impact is from the column on the row. Self-impact is shown in the diagonal panels 
        as red solid lines, cross-impact is shown as blue dashed lines and present in all panels. Price impact is calculated as average price change (multiplied by the trade sign) after a lag of 2 seconds, the minimum time that ensures we observe an update of the limit order book. Self- and cross-impact is clearly non-linear. For comparison the solid black line in the lower left panel illustrates a linear impact function. }
		\label{fig:f_measurement_all}
\end{figure}

Figure \ref{fig:f_measurement_all} shows self- and cross-impact between all bonds in our sample as a function of trade size $V$ measured in units of face value. Cross-impact is universally present across our sample and on average smaller than self-impact by roughly one order of magnitude. The cross-impact curves of different pairings $ij$ are very close one to the other when both bonds have a time-to-maturity of at least four years left. For bonds with three or less years left until maturity we do not observe an intense trading activity, thus  the curves in the leftmost column in the figure are very noisy. Likely the price-dynamics of these short-term titles are more decoupled from the medium- and long-term bonds with a lifespan of four or more years. The figure shows that all the estimated functions $f^{ij}(V)$ are non-linear, being concave and well described by a power law behavior with an exponent smaller than 1. This has been already observed in self-impact (\cite{LilloNature}) and is extended here to cross-impact.\footnote{In principle this observation suggests the presence of arbitrage opportunities due to the violation of Lemma \ref{lemma:finite_linear}. However we should remember that what is shown in Figure \ref{fig:f_measurement_all} is the observed impact, which might be different from the virtual impact, since the former does not take into account the selection bias due to the fact that traders condition the market order volume to what is present at the opposite best. For a discussion of this point in the self-impact case, see \cite{bouchaud2008markets}.}  

Having established the evidence for cross-impact, we investigate its possible origin: Is this due to correlated trades across assets (e.g. a strategy trading several bonds simultaneously) or is it mostly due to quote revision following a trade, leading to changes of the mid-price of a bond in the absence of trades? To discriminate between these alternatives, we repeat the analysis in Figure \ref{fig:f_measurement_all} and distinguish now whether there were any trades beyond the triggering one in any other bond in our sample during a period from 3 seconds before to 2 seconds after the triggering transaction, which we will call \textit{isolated trades}. 
\begin{figure}[!ht]
    \centering
        \includegraphics[width=\textwidth]{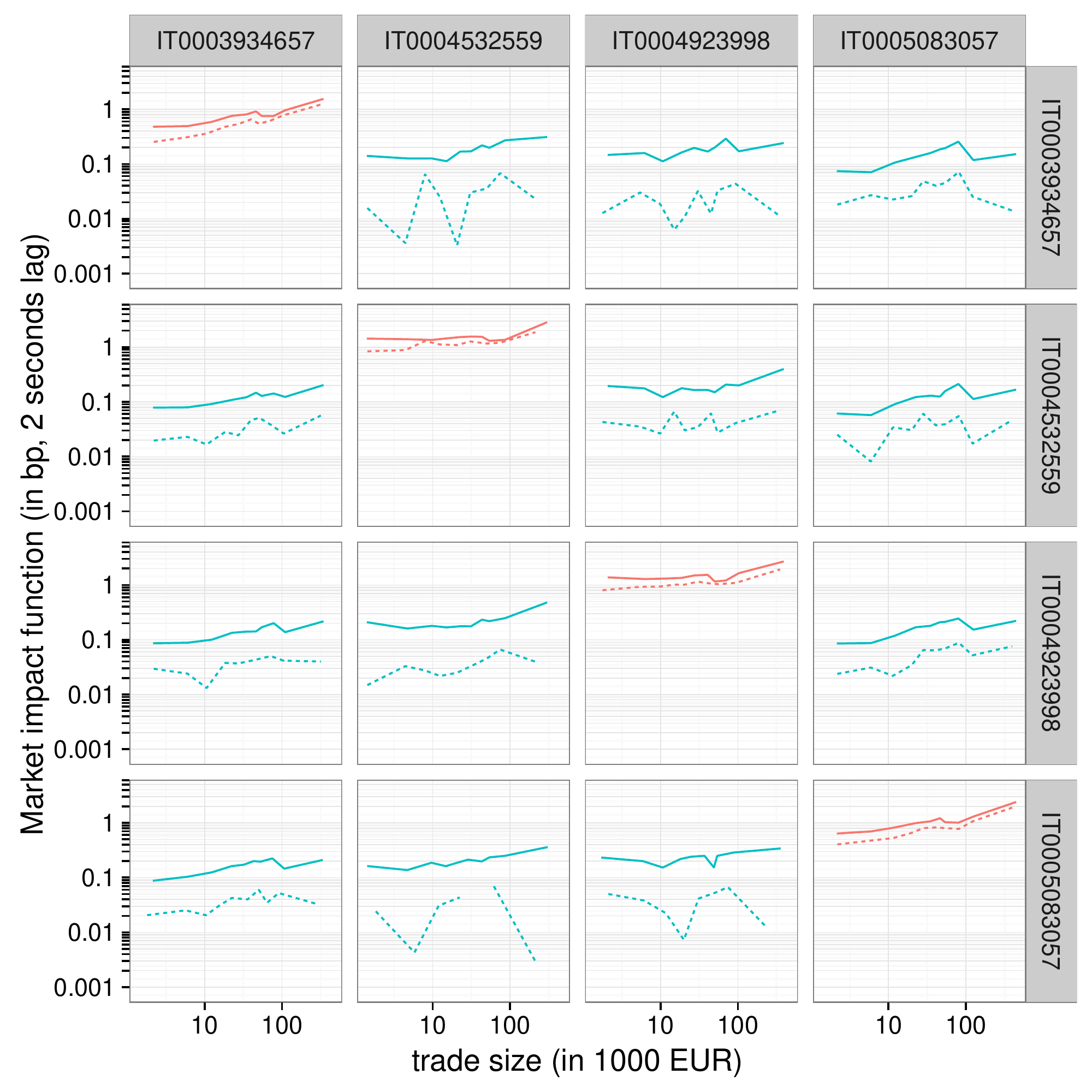}
		\caption{Plot of the average self- and cross-impact function among the four most recent 30 year bonds in our sample. Each line corresponds to one pairing $ij$, where impact is from the column on the row. Self-impact is present on the diagonal panels in red, cross-impact on the off-diagonals in blue. Solid lines show the market impact function based on all trades as in Figure \ref{fig:f_measurement_all}, dotted lines show market impact based on isolated trades only, i.e. when there was no other transaction from 3 seconds before to 2 seconds after the triggering trade. }
		\label{fig:f_measurement_sametime}
\end{figure}
For better readability in Figure \ref{fig:f_measurement_sametime} we focus on the four most recently issued 30 year BTPs in our sample, which were shown in the lower right panel of Figure \ref{fig:f_measurement_all}. Results are similar for all other pairs of bonds. 
When we consider market impact of isolated trades only, self impact is lower than unconditionally. This is somewhat expected since order signs are positively autocorrelated and we exclude contributions where other trades have on average  a positive contribution to impact. 
However the decrease in market impact is stronger for the cross-impact components, which are smaller by a factor of $\sim 5-10$ on average, whereas self-impact decreases only by a factor $\sim 2$ on average. 
We conclude therefore that both an autocorrelation of orders across assets as well as quote revisions play a role in forming cross-impact. In the next section we will take into account the (cross-) autocorrelation of the order sign when we estimate the shape of the decay of market impact.

\subsection{Decay kernel}
\label{sec:emp_decay}

To estimate the empirically observed decay function we employ a multivariate version of the transient impact model of \cite{bouchaud2004fluctuations} and similarly to \cite{benzaquen2016dissecting,wang2016microscopic}. While the advantage of the model lies in the fully non-parametric estimation of the kernel that we obtain, the TIM is typically estimated in event time which is asset-specific. Previous approaches avoid potential pitfalls by estimating the propagator in calendar time and binning trades. The estimation then is sensitive to the bin width. A small bin-width such as 1 second in \cite{wang2016microscopic} introduces problems in the treatment of bins without trading activity, while a large bin width such as 5 minutes in \cite{benzaquen2016dissecting} is too coarse to observe effects of single transactions. The main difference of our estimation is that we estimate the propagator in a combined market order time. Specifically our combined trade time is defined to advance by one unit for any unique timestamp at which there is at least one trade recorded, irrespective of the asset(s).\footnote{In other words, each trade advances time by one step, unless when there are two or more trades (in the same or different assets) recorded at exactly the same timestamp (at millisecond resolution). In such a case our combined trade time advances only by 1. In our sample ca. $3\%$ of trades happen at the same time-stamp as another trade in a different bond. }

Our model for the (log-) mid-price $X^i_t$ of asset $i$ just before a trade at time $t$ reads 
\begin{equation}
X^i_t = \sum_{t' < t}{ \left\{ \sum_j{ \left[ H^{ij}(t-t') \epsilon^j_{t'} I^j_{t'} \right] } + \xi^i_{t'} \right\} } + X^i_{-\infty}
\label{eq:mvtim_m}
\end{equation}
where $\epsilon^i_t$ is the order sign and $I^i_t$ an indicator function for a trade in asset $i$ at time $t$ as defined in Section \ref{sec:mot}. 
$\xi$ is a noise term with correlation matrix ${\bm \Sigma}^{(\xi)}$ and the empirically observed correlation structure of returns ${\bm r}$ of ${\bm X}$ is not ${\bm \Sigma}^{(\xi)}$ but the noise component ${\bm \Sigma}^{(\xi)}$ plus the component due to the correlated order flow and cross-impact ${\bm \Sigma}^{({\bm H})}$, as shown in \cite{benzaquen2016dissecting}. Finally self- and cross-impact is captured by the propagator matrix $H^{ij}(\delta t)$ which gives the price impact of a trade in asset $j$ on asset $i$ after a positive time lag $\delta t$. Note that here we assume that trades of all volumes have the same impact and to avoid confusion with the previous sections we denote the decay kernel ${\bm H}$.\footnote{${\bm H}$ corresponds to the elementwise product of ${\bm f}$ and ${\bm G}$ as defined in equation (\ref{eq:price_ND}), given the assumption of indifference to trade size.} In this model returns $r^i_t$ in asset $i$ from a trade at time $t$ to the next time-step are then defined as 
\begin{align}
	r^i_t 	&= X^i_{t+1} - X^i_t \nonumber \\
			&= \sum_j{ \underbrace{H^{ij}(1)}_{{\cal H}^{ij}(0)} \epsilon^j_{t} I^j_{t}  } + \sum_j{ \sum_{t'<t}{{\underbrace{H^{ij}(t+1-t') - H^{ij}(t-t')}_{{\cal H}^{ij}(t-t')}}^{ij}(t-t') \epsilon^j_{t'} I^j_{t'}  } } + \xi^i_t	\nonumber \\
   			&= \sum_j{ \sum_{t'\leq t}{{\cal H}^{ij}(t-t') \epsilon^j_{t'} I^j_{t'}  } }  + \xi^i_t                 
  \label{eq:mvtim_r}
\end{align}
where ${\bm{\mathcal{H}}} (\ell)\equiv {\bm H}(\ell+1)- {\bm H}(\ell)$, ${\bm H}(\ell \leq 0) \equiv 0$ and due to the definition of the price process in equation (\ref{eq:mvtim_m}) a lag of $\tau = 0$ as the argument of $\bm G$ in equation (\ref{eq:price_ND}) corresponds to $\ell = 1$ for $\bm H$. In practice (both due to computational limitations and to avoid dealing with overnight effects) the sum over $t'$ is performed up to a cutoff lag $p$. For an estimation of ${\bm H}$ that is more stable with respect to $p$ (\cite{eisler2012price}) we compute the observable $\tilde{\mathcal{S}}^{ij}(\ell)$ 
\begin{align} 
  \tilde{\mathcal{S}}^{ij}(\ell) 	
  	&= \mathbb{E}[r^i_{t+\ell} \epsilon^j_{t} I^j_{t} ] \\   
	&= \sum_k{ \sum_{n \geq 0}{ {\cal H}^{ik}(n) \mathbb{E}\left[ \epsilon^k_{t+\ell-n} I^k_{t+l-n} \epsilon^j_{t} I^j_{t} \right] } } \nonumber \\
	&= \sum_k{ \sum_{n \geq 0}{ {\cal H}^{ik}(n) {\tilde{C}^{kj}(\ell - n) }} }                                                                     
	\label{eq:mvtim_s}
\end{align}
where $\bm{\tilde{C}}(\ell - n)$ is the cross-correlation matrix of the modified order sign $\epsilon^i_{t} I^i_{t}$ at lag $\ell - n$.\footnote{Note that even though we refer to it as such, $\bm{\tilde{C}}$ is not strictly speaking a correlation matrix, as we do not de-mean nor normalize $\epsilon^i_{t} I^i_{t}$. } To estimate $H^{ij}(n) = \sum_{l=0}^{n-1}{ {\cal H}^{ij}}$ we re-write equation (\ref{eq:mvtim_s}) as a matrix equation 
\begin{equation}
\bm{\tilde{\mathcal{S}}} = \bm{{\cal H}} \bm{\tilde{\mathbf{C}}}
\label{eq:mvtim_matrix}
\end{equation}
where with a slight abuse of notation $\bm{\tilde{\mathcal{S}}}$ and $\bm{{\cal H}}$ are row vectors of $N \times N$ block matrices, i.e. $\bm{\tilde{\mathcal{S}}} = (\bm{\tilde{\mathcal{S}}}(0), \cdots, \bm{\tilde{\mathcal{S}}}(p-1))$ and $\bm{{\cal H}} = \bm{({\cal H}}(0), \cdots, \bm{{\cal H}}(p-1))$, and $\bm{\tilde{C}}$ is a symmetric block-Toeplitz matrix of $p \times p$ blocks of the correlation matrices at different lags, of dimension $N p \times N p$, \begin{equation} 
\bm{\tilde{C}} = 
 \begin{pmatrix}
  \bm{\tilde{C}}(0) & \bm{\tilde{C}}(1) & \cdots & \bm{\tilde{C}}(p-1) \\
  (\bm{\tilde{C}}(1))^{\intercal} & \bm{\tilde{C}}(0) & \cdots & \bm{\tilde{C}}(p-2) \\
  \vdots  & \vdots  & \ddots & \vdots  \\
  (\bm{\tilde{C}}(p-1))^{\intercal} & (\bm{\tilde{C}}(p-2))^{\intercal} & \cdots & \bm{\tilde{C}}(0) 
 \end{pmatrix} 
\label{eq:mvtim_cmat}
\end{equation}
where we use that $\bm{\tilde{C}}(-m)) = (\bm{\tilde{C}}(m))^{\intercal}$. To estimate $\bm{{\cal H}}$ and thus $\bm{H}$ we invert $\bm{\tilde{C}}$ and right-multiply equation (\ref{eq:mvtim_matrix}) with $\bm{\tilde{C}}^{-1}$, where both $\bm{\tilde{S}}$ and $\bm{\tilde{C}}$ are constructed from (weighted) averages over daily estimations. 
\begin{figure}[!ht]
    \centering
        \includegraphics[width=0.7 \textwidth]{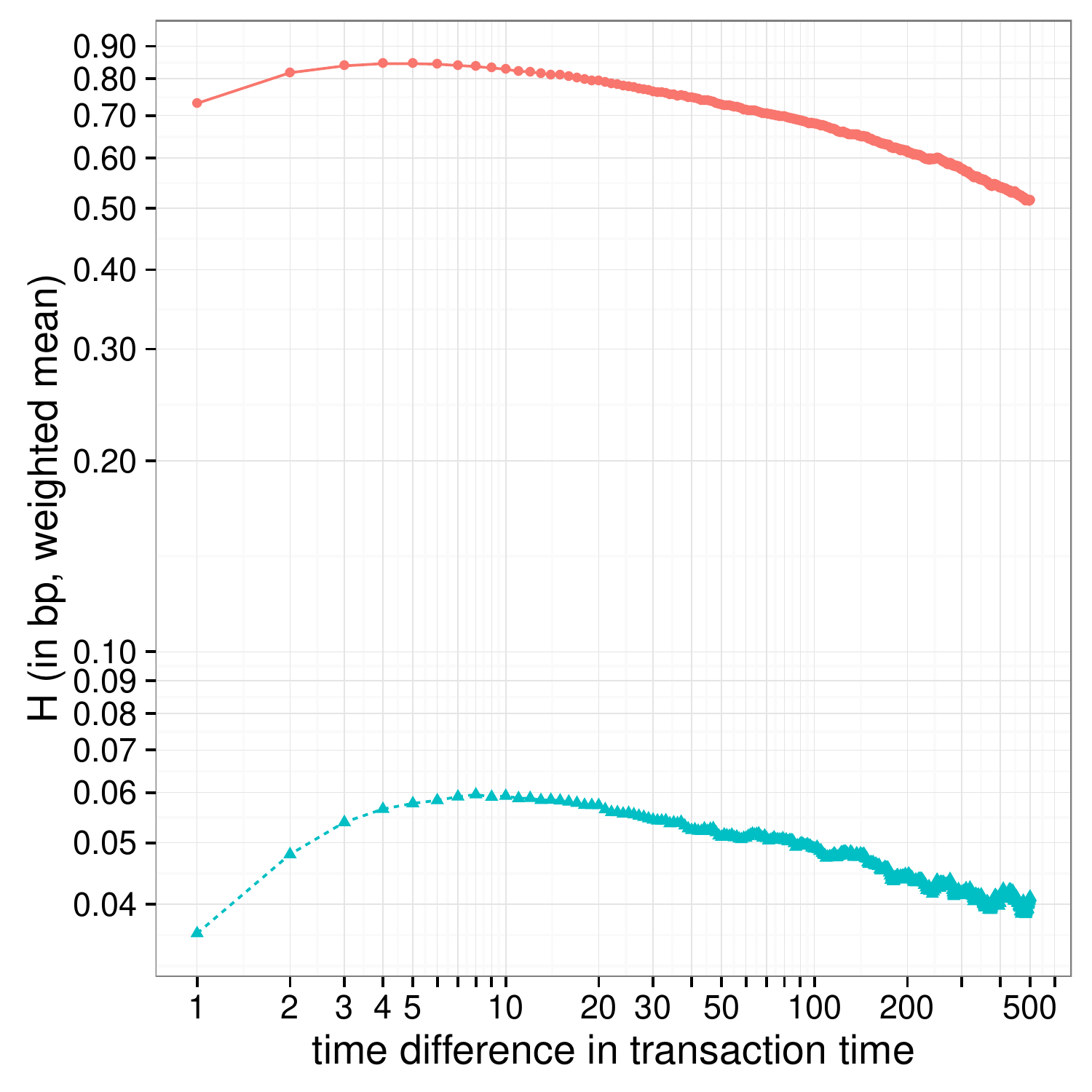}
		\caption{Plot of the estimated average decay kernel $H^{ij}$ (in basis points) for self-impact (red dots connected by solid lines) and cross-impact (blue triangles connected by dashed lines) among all bonds in our sample. For self- (cross-) impact we show the mean over all bonds (pairings) weighted by the number of transactions in the triggering bond.}
		\label{fig:G_est}
\end{figure}

Figure \ref{fig:G_est} shows the mean of the decay kernel $H^{ij}(\tau)$ for self- and cross-impact averaged over all the bonds and pairings and weighted by the number of transactions. The mean and median values are not shown here but behave similarly. Both propagators do not decay immediately but reach their peak after $\sim 10$ transactions. This indicates a market inefficiency which has been observed for self-impact in other markets (see e.g. Figure 1 in \cite{taranto2016linear}). In the absence of slippage this inefficiency could be exploited by e.g. a simple buy-hold-sell strategy. However here the expected gain is on the order of $\sim 0.1$ basis points while spread costs are $> 1$ basis points so that such a strategy would not be profitable. Further we observe that self- and cross-impact decay rather slowly with average self-impact reaching its initial level after $\sim 100$ transactions, corresponding to $\sim 10$ minutes of physical time and cross-impact taking even longer.

\subsection{Testing for symmetry of cross-impact}
\label{sec:test_symm}

We have shown in section \ref{sec:finite_symmetry} that for a bounded decay kernel the strength of cross-impact must be symmetric across pairs, i.e. $\eta^{ij} = \eta^{ji}$. Here we check whether this is empirically verified. 
In the estimation of the previous section where we are averaging over the trade volume, effectively regressing returns on trade events, this corresponds to the condition that $\hat{H}^{ij}(1) = \hat{H}^{ji}(1)$, i.e. we are assuming that prices are roughly constant so that absolute returns can be approximated by relative returns and that the average value (trade volume weighted by price) does not differ across bonds. As a robustness check, we repeat the estimation taking into account trading value, i.e. we modify Equation (\ref{eq:mvtim_r}) to be
\begin{equation}
	r^i_t = \sum_k{ \sum_{t'\leq t}{\tilde{\cal K}^{ik}(t-t') \epsilon^k_{t'} W^k_{t'} I^k_{t'} } }	
  \label{eq:mvtim_r_value}
\end{equation}
where we are now regressing returns on traded value $W^i_t = S^i_t V^i_t$ and the estimated impact and the decay kernel $\bm{K}(n) = \sum_{l=0}^{n-1}{\bm{\tilde{{\cal K}}}(l)}$ is connected to the $\bm{\eta}$ and $\bm{G}$ discussed in Sections \ref{sec:model} and \ref{sec:generalconstraints} via
\begin{equation}
	\tilde{K}^{ij}(t-s) = \tilde{\eta}^{ij} G^{ij}(t-s) = \frac{\eta^{ij}}{S^i_t S^j_t} G^{ij}(t-s)
  \label{eq:eta_tilde}
\end{equation} 
when assuming a linear $\bm{f}$. Clearly the symmetry of $\eta^{ij}$  of Lemma \ref{lemma:linbounded_symm} must hold also for $\tilde{\eta}^{ij} = \tilde{K}^{ij}(1)$ which is the impact and decay kernel estimated at its smallest lag. Again we assume a roughly constant bond price process $\bm{S}$. The added accuracy of this estimation due to including the value is countered by the fact that the empirically observed market impact function is non-linear.  

While we may easily check for symmetry on the estimated impact matrix $\bm{\eta}$, this does not allow for any statement on its statistical significance. Therefore we repeat the  estimation on a shorter time scale, i.e. we obtain $\bm{\tilde{\mathcal{S}}}$ and $\bm{\tilde{\mathbf{C}}}$ by averaging over the days of each calendar week instead of over the whole sample period and estimate the decay kernel $\bm{H}_w$ (or $\bm{K}_w$ respectively for the estimation on trade value) for each week $w$ separately. For each of the 41 estimated $\bm{H}_w$  we compute the asymmetry $\Delta H^{ij}_w = H^{ij}_w(1) - H^{ji}_w(1) $ and for each of the $33\times 32 /2=528$ pairs we perform a Student's t-test  of the null hypothesis that $\Delta H^{ij}_w = 0$. For robustness we repeat this for three different aggregation periods: weekly as described above, bi-weekly, and monthly.  

\begin{table}[ht]
\centering
\begin{tabular}{ll|rrr}
\hline 
 \multicolumn{2}{c|}{Percentage of significantly asymmetric pairs} & \multicolumn{3}{c}{ confidence level} \\
 regression on & aggregation & 1\% & 5\% & 10\% \\
\hline 
\multirow{3}{*}{trade events} & weekly & $8.0\%$ & $16.3\%$ & $24.4\%$\\
 & bi-weekly & $6.1\%$ & $15.3\%$ & $24.6\%$ \\
 & monthly & $4.0\%$ & $14.0\%$ & $24.1\%$ \\
\hline 
\multirow{3}{*}{trade value} & weekly & $3.0\%$ & $12.1\%$ & $21.8\%$ \\
 & bi-weekly & $3.4\%$ & $11.6\%$ & $21.0\%$ \\
 & monthly & $2.5\%$ & $11.0\%$ & $21.8\%$ \\
\hline 
\end{tabular}
\caption{Percentage of bond-pairs for which the null of symmetry in cross-impact is rejected according to a t-test on the null $\Delta H^{ij}_w =  H^{ij}_w(1) - H^{ji}_w(1) = 0$ ($\Delta K^{ij}_w = K^{ij}_w(1) - K^{ji}_w(1) = 0$). Tests are performed  on weekly/bi-weekly/monthly estimations of $\bm{H}$ ($\bm{K}$) from regressions of returns on signed trades (value of trades).}
\label{tab:asymm}
\end{table}
In Table \ref{tab:asymm} we report the number of pairs for which the null hypothesis that $\Delta H^{ij}=0$ ($\Delta K^{ij}=0$) is rejected. The table reveals that for all scenarios and for all confidence levels the number of bond pairs for which the assumption of symmetric cross-impact is not supported is larger than the number expected under the null hypothesis.\footnote{We have been unable to make out any obvious patterns which pairs are significantly asymmetric when ordering by various measures of liquidity and trading activity (time-to-maturity, maturity, bid-ask spread, average number of trades per day, average trade volume, turnover, tick size). This suggests that the asymmetry we observe is not just a mere artifact of any of those measures. } This implies that in principle it is possible to exploit this dynamic arbitrage opportunity in at least some pairs, for example by using the strategy presented in Section \ref{sec:finite_symmetry}. 

We now check whether such a strategy would also be easily profitable in the bond pairs singled out above when taking into account bid-ask spread costs. We proxy the slippage by the bid-ask spread $B^i$ of bond $i$ and thus for the strategy in equation (\ref{eq:strategy_asymm}) we obtain 
\begin{align}
\label{eq:slippage}
C^{\text{slippage}} &\simeq (v^a B^a + v^b B^b) \frac{T}{3} \\
C^{\text{cross}} &\simeq v^a v^b \frac{T^2}{18} \Delta \eta
\end{align}
where $\Delta \eta = |\eta^{ab} - \eta^{ba}|$. In order to make a profit the ratio
\begin{equation}
\frac{C^{\text{cross}}}{C^{\text{slippage}}} \simeq \frac{v^a v^b T \Delta \eta}{6 (v^a B^a+ v^b B^b)} \: ,
\label{eq:costeff_asymm}
\end{equation} 
which is scaling as $v T$, must be larger than one. To evaluate this ratio we need to make further assumptions on the the trading rate $v$ and the execution duration $T$. 
For the duration $T$, we need to keep in mind that in the proof of Lemma \ref{lemma:linbounded_symm} we operated in the limit of very fast trading, i.e. under the assumption that the kernel is approximately constant. The most conservative estimate for $T$ then would be 3 units of  trade time, as this is the fastest we can execute the three phases of the strategy. On the other extreme, the empirically observed decay of impact in Figure \ref{fig:G_est} suggests that the kernel actually first increases for $\sim 10$ trade time units and then decays slowly, reaching its initial value only after $\sim 100$ ($\sim 500$) trade time units for self-impact (cross-impact). Thus the maximal value of $T$ which is consistent with the constant kernel is of the order of $\sim 100$ trade time units. If we assume too high of a value for the trading rate $v$, the assumption of linear slippage costs no longer holds, as liquidity in the limit order book would be consumed faster than  being replenished and additional price impact costs would arise. We therefore suppose that an arbitrageur would use an average-sized trading rate and assume $v^i T$ as three times the average trade value in asset $i$ as reported in Table \ref{tab:bonds}.\footnote{To see this, denote the average trade size in asset $i$ as $\bar{x}^i$ shares and take the case of executing the strategy in three trades corresponding to 3 units of trade time. The first phase, i.e. the first trade, lasts $T/3$ and is of size $\bar{x}^i$ shares, therefore $v^i T = 3 \bar{x}^i$ shares.}

In the following we estimate the ratio in equation (\ref{eq:costeff_asymm}) for the set of 6 pairs ($1.1\%$) where symmetry is rejected at the $5\%$ level for at least five of the six aggregation and regression scenarios of Table \ref{tab:asymm}.\footnote{Considering different sets leads to similar results.} Conservatively assuming $T$ as $3$ units of trade time and with the assumptions described above, we get $\frac{C^{\text{cross}}}{C^{\text{slippage}}} \sim 1 \cdot 10^{-4}$. If we are able to maintain this strategy for a longer period $T$ (at the same trading rate), we are getting closer to profitability since gains from cross-impact scale as $T^2$ and losses due to slippage as $T$. Assuming $T = 100$, i.e. the timescale when on average impact has decayed beyond its initial timescale, yields a ratio $\frac{C^{\text{cross}}}{C^{\text{slippage}}} \sim 0.005$. Only if we were able to neglect decay and other costs when executing our strategy throughout a whole trading day, it could turn profitable. Given that we do observe a faster decay (but also considering the associated risk and the chance that dominating the trading activity with the strategy might produce a less favorable impact structure) we conclude that dynamic arbitrage from cross-impact is unprofitable at least with our simple trading strategy. However our results also indicate that it is worth taking cross-impact into account when executing other strategies.

\FloatBarrier
\section{Conclusion}
\label{sec:conclusion}

Even though cross-impact has been studied in the theoretical literature on optimal portfolio liquidation, empirical studies have been scarce until very recently. In this paper we aim to connect the two strands of literature from a no-dynamic-arbitrage perspective. 

A desirable market impact model should be free of arbitrage opportunities. In this paper we focus on the specific class of multi-asset Transient Impact Models (TIMs)  of market impact and we derive some necessary conditions for the absence of dynamic arbitrage. In particular, by using specific examples of simple round-trip strategies, we focus our attention on possible   constraints on the shape and size of cross-impact. 

One such condition is symmetry of cross-impact with respect to its direction between assets and we test it on empirical cross-impact kernels obtained by estimating a  TIM in transaction time on Italian sovereign bonds traded in the MOT electronic market. Due to the strongly interrelated nature of these assets,  we believe cross-impact plays a much stronger role here than compared to other asset classes such as, for example, equities. We find that while there exist statistically significant violations of the no-arbitrage conditions related to impact symmetry, these are unprofitable because of slippage costs such as the bid-ask spread which are neglected in the theoretical considerations. 

In addition to this, we want to stress our contributions in describing the high-frequency market microstructure of the MOT sovereign bond market, applying the TIM to fixed-income markets and presenting evidence for cross-impact at the level of single orders instead of aggregated order flows. While this type of modeling and empirical estimation has been performed on many different types of markets, the application to sovereign bond (electronic) markets is new. 
This makes our paper also relevant from a monetary policy point of view. Recent studies (\cite{schlepper2017scarcity,arrata2017price,desantis2017flow}) that aim to quantify the price impact (measured as decline in the yield to maturity) of Quantitative Easing purchases in the Euro area could benefit from taking into account cross-impact effects.

\clearpage
{
\bibliographystyle{abbrvnat}
\bibliography{cross_impact}
}

\newpage
\appendix


\section{Proofs}
\label{app:proofs}

Let us recall that for all proofs in this section we assume that the decay kernel $\bm{G}$ is bounded, i.e. there exists an upper bound $U > 0$ so that $|G^{ij}(\tau)| < U$ for all $\tau \in [0,\infty)$ and all $i,j$ and therefore we take $\bm{G}$ as normalized to $1$ for its smallest lag $\tau = t-s$, i.e. $G^{ij}(0) = 1$ for all pairs $ij$. Further we assume $\bm{G}(\tau)$ to be non-increasing and right-continuous at $\tau = 0$, i.e.
\begin{equation}
\label{eq:def_continuity}
\forall \: \varepsilon > 0 \: \exists \: T_{\varepsilon} > 0 \: \text{such that} \: \forall \: \tau \: \text{with} \: 0 < \tau < T_{\varepsilon} \: \text{and} \: \forall \: i,j: \: |G^{ij}(0) - G^{ij}(\tau)| < \varepsilon \: .
\end{equation}

\begin{proof}[Proof of Lemma \ref{lemma:antisym_ND}]
We first show that a non-odd $f$ leads to a price manipulation in the single-asset case. Let us therefore assume an in-out strategy $\Pi$ as in the first component of Example \ref{eg:strategy_easy} with $\kappa = -1$ and therefore $\vartheta = 1/2$, i.e. both phases of trading last equally long. That is we first accumulate a position at the rate $v > 0$ to then liquidate it at the same negative rate $-v$. 
Without loss of generality we choose $v$ such that $f(v) > -f(-v) \geq 0$ and that 
\begin{equation}
\label{eq:eps_odd}
\varepsilon \coloneqq  \frac{1}{4} \frac{f(v) + f(-v)}{f(v)} > 0
\end{equation}
which is non-zero since $f$ is not an odd function.\footnote{In the case that $f(v) < -f(-v)$ all we need is to change the sign in equation (\ref{eq:eps_odd}) to ensure that $\varepsilon$  is positive. In the cases that either $f(v) \leq 0 \wedge f(-v) \leq 0$ or $f(v) \geq 0 \wedge f(-v) \geq 0$ the price manipulation arises in a simple in-out or out-in strategy as above respectively. Finally if assuming $vf(v) \leq 0$ the proof is analogous to the one above.} Then by continuity there exists a $T_{\varepsilon} > 0$ for which we can bound the cost of our strategy as 
\begin{align}
C(\Pi) &= v f(v) \int_0^{T/2}{\mathrm{d}t \int_0^t{G(t-s) \mathrm{d}s } } \nonumber \\
			&\qquad -v f(-v) \int_{T/2}^{T}{\mathrm{d}t \int_{T/2}^t{G(t-s) \mathrm{d}s } } \nonumber \\ 
            & \qquad -v f(v) \int_{T/2}^{T}{\mathrm{d}t \int_{0}^{T/2}{G(t-s) \mathrm{d}s } } \nonumber \\
       &\leq v f(v) \frac{T^2}{8} - v f(-v) \frac{T^2}{8} - v f(v) (1-\varepsilon) \frac{T^2}{4} \nonumber \\ 
       &= v \frac{T^2}{8} \left[ -f(v) - f(-v) + 2\varepsilon f(v) \right] \nonumber \\
       &< 0
\end{align}
when choosing $T = T_{\varepsilon}$. That is for any trading rate $v$ there is a $ T_{\varepsilon}>0$ for which there is a price manipulation by our choice of $\varepsilon$. Therefore we conclude that $f(v)$ must be an odd function of $v$ in the single-asset case and the same holds for self-impact $f^{ii}$ in the multi-asset case with cross-impact since we can always execute a strategy in only one asset. 

Let us now show that the same holds for cross-impact. We choose $v_a, v_b > 0$ and for simplicity we assume that $\sgn (\dot{x}_i f^{ij}(\dot{x}_j) = \sgn (\dot{x}_i \dot{x}_j)$ for all $i,j$, the proof being analogous in the other cases. We re-define 
\begin{equation}
\label{eq:eps_odd_2d}
\varepsilon \coloneqq  \frac{v_a\left[ f^{ab}(v_b) + f^{ab}(-v_b) \right]}{4 v_a f^{aa}(v_a) + 4 v_b f^{bb}(v_b) + 3 v_a f^{ab}(v_b) - v_a f^{ab}(-v_b) + 3 v_b f^{ba}(v_a) - v_b f^{ba}(-v_a) } > 0
\end{equation}
where again we assume for simplicity that $f^{ab}(v_b) + f^{ab}(-v_b) > 0$, the proof being analogous in the other case with $f^{ab}(v_b)$ and $f^{ab}(-v_b)$ interchanged in (\ref{eq:eps_odd_2d}). 
We assume a first strategy $\Pi_1$ as in Example \ref{eg:strategy_easy}, again with $\kappa = -1$ and therefore $\vartheta = 1/2$, and $\lambda > 0$, that is we first accumulate both assets at the rates $v_a, v_b >0$ to then liquidate both positions at the negative rates $-v_a, -v_b < 0$:  
\begin{equation}
\Pi_1 = \left\{ \dot{\bm{x}}_t \right\} \quad, \quad \dot{\bm{x}}_t = \begin{cases} 
\left(+v_a, +v_b \right)^{\intercal} \quad \text{for} \quad 0 \leq t \leq T/2 \\
\left(-v_a, -v_b \right)^{\intercal} \quad \text{for} \quad T/2 < t \leq T 
\end{cases}
\label{eq:trading_perm_1}
\end{equation}
Choosing $T = T_{\varepsilon}$ we are able to bound the cost of this strategy as above, obtaining
\begin{align}
C(\Pi_1)	&\leq \frac{T^2}{8} \sum_{i,j = a,b}{ v_i f^{ij}(v_j) -v_if^{ij}(-v_j) - 2v_i (1-\varepsilon) f^{ij}(v_j) } \nonumber \\
					&= 		\frac{T^2}{8} \left\{ -v_a \left[ f^{ab}(v_b) + f^{ab}(-v_b)  \right] - v_b \left[ f^{ba}(v_a) + f^{ba}(-v_a) \right] + 2 \varepsilon \sum_{i,j = a,b}{v_i f^{ij}(v_j)} \right\} \: .
\label{eq:cost_ND_Pi1}
\end{align}
Repeating the same estimation for a strategy $\Pi_2$ which is anti-symmetric with $\lambda < 0$ but otherwise as above, i.e. 
\begin{equation}
\Pi_2 = \left\{ \dot{\bm{x}}_t \right\} \quad, \quad \dot{\bm{x}}_t = \begin{cases} 
\left(+v_a, -v_b \right)^{\intercal} \quad \text{for} \quad 0 \leq t \leq T/2 \\
\left(-v_a, +v_b \right)^{\intercal} \quad \text{for} \quad T/2 < t \leq T 
\end{cases}.
\label{eq:trading_perm_2}
\end{equation}
The cost is similarly bounded from above:
\begin{align}
C(\Pi_2) 	\leq 	\frac{T^2}{8} &\left\{ - v_a \left[ f^{ab}(v_b) + f^{ab}(-v_b)  \right] + v_b \left[ f^{ba}(v_a) + f^{ba}(-v_a) \right] \right. \nonumber \\
					 &\left. + \varepsilon \left[ 2 v_a f^{aa}(v_a) + 2 v_b f^{bb}(v_b) + v_a f^{ab}(v_b) - v_a f^{ab}(-v_b) + v_b f^{ba}(v_a) - v_b f^{ba}(-v_a) \right] \right\} \: .
\label{eq:cost_ND_Pi2}
\end{align}
Combining the cost of the two strategies with identical parameters $v_a$, $v_b$ we have 
\begin{alignat}{3}
C(\Pi_1) + C(\Pi_2) 	\leq 	\frac{T^2}{4}  \left\{ \right. &- v_a  &&\left[ f^{ab}(v_b) + f^{ab}(-v_b) \right]  \nonumber \\
					 &+ \varepsilon &&\left[ 4 v_a f^{aa}(v_a) + 4 v_b f^{bb}(v_b) \right. \nonumber \\
                    &   &&\left. \left. +3 v_a f^{ab}(v_b) - v_a f^{ab}(-v_b) + 3 v_b f^{ba}(v_a) - v_b f^{ba}(-v_a) \right] \right\} 
\label{eq:cost_ND_Pi1and2}
\end{alignat}
which is negative for our choice of $\varepsilon$ and therefore price manipulation is possible. 

\end{proof}

\begin{proof}[Proof of Lemma \ref{lemma:finite_linear}]
Consider a scenario as in Example \ref{eg:strategy_easy}, i.e. trading in two assets over two phases, denoted by $\RN{1}$ and $\RN{2}$, at a constant rate during each phase, with $\lambda = v_{a,\RN{1}} / v_{b,\RN{1}} = v_{a,\RN{2}} / v_{b,\RN{2}}$, $\kappa = v_{i, \RN{1}} / v_{i, \RN{2}} = v_{\RN{1}} / v_{\RN{2}} < 0$ and the turn-around point $\Theta = T \frac{-v_{i,\RN{2}}}{v_{i,\RN{1}} - v_{i,\RN{2}} } = T \frac{1}{1 - \kappa}$ that is common to both assets. From the single-asset case we use the result that self-impact is linear and denote this as 
\begin{equation}
f^{ii}(v) = \eta^{ii} v 
\label{eq:self_linear}
\end{equation}  
and for simplicity we assume $\lambda > 0$, i.e. trading in the same direction. We define 
\begin{equation}
\label{eq:eps_linear_2d}
\varepsilon \coloneqq  - \frac{1}{4} \frac{v_{a,\RN{1}} f^{ab}(v_{b, \RN{2}}) - v_{a,\RN{2}} f^{ab}(v_{b, \RN{1}}) + v_{b,\RN{1}} f^{ba}(v_{a, \RN{2}}) - v_{b,\RN{2}} f^{ba}(v_{a, \RN{1}})}{\eta^{aa} v_{a,\RN{1}} v_{a,\RN{2}} + \eta^{bb} v_{b,\RN{1}} v_{b,\RN{2}} + v_{a,\RN{2}} f^{ab}(v_{b, \RN{1}}) + v_{b,\RN{2}} f^{ba}(v_{a, \RN{1}})} 
\end{equation}
where we are assuming that $\varepsilon > 0$, otherwise we simply replace $\kappa$ with $1/\kappa$. By continuity of $\bm{G}$ as defined in Equation (\ref{eq:def_continuity}) we can choose $T = T_{\varepsilon}$ such that cost can be bounded from above as
\begin{align}
C(\Pi) 	&= \sum_{i,j = a,b}{ C_A^{ij} + C_B^{ij} + C_C^{ij}} \nonumber \\
		&\leq \sum_{i,j = a,b}{ v_{i,\RN{1}} f^{ij}(v_{j,\RN{1}}) \frac{\Theta^2}{2} 
           					  + v_{i,\RN{2}} f^{ij}(v_{j,\RN{2}}) \frac{(T-\Theta)^2}{2}
                              + v_{i,\RN{2}} f^{ij}(v_{j,\RN{1}}) (1-\varepsilon) \Theta (T-\Theta)} \nonumber \\
        &= \frac{T^2}{2} \frac{\kappa}{(\kappa -1)^2} \left\{ v_{a,\RN{1}} f^{ab}(v_{b, \RN{2}}) - v_{a,\RN{2}} f^{ab}(v_{b, \RN{1}}) + v_{b,\RN{1}} f^{ba}(v_{a, \RN{2}}) - v_{b,\RN{2}} f^{ba}(v_{a, \RN{1}}) \right. \nonumber \\
        &\left. \qquad \qquad \qquad + 2\varepsilon \left[ \eta^{aa} v_{a,\RN{1}} v_{a,\RN{2}} + \eta^{bb} v_{b,\RN{1}} v_{b,\RN{2}} + v_{a,\RN{2}} f^{ab}(v_{b, \RN{1}}) + v_{b,\RN{2}} f^{ba}(v_{a, \RN{1}}) \right] \right\} \nonumber \\
                &= \frac{T^2}{4} \frac{\kappa}{(\kappa -1)^2} \left\{ v_{a,\RN{1}} f^{ab}(v_{b, \RN{2}}) - v_{a,\RN{2}} f^{ab}(v_{b, \RN{1}}) + v_{b,\RN{1}} f^{ba}(v_{a, \RN{2}}) - v_{b,\RN{2}} f^{ba}(v_{a, \RN{1}}) \right\} \nonumber \\
        &= \frac{T^2}{4} \frac{\kappa}{(\kappa -1)^2} \left\{ v_{\RN{1}} \left[ \lambda f^{ab}(v_{\RN{2}}) + f^{ba}( \lambda v_{\RN{2}}) \right] - v_{\RN{2}} \left[ \lambda f^{ab}(v_{\RN{1}}) + f^{ba}( \lambda v_{\RN{1}}) \right] \right\} 
\label{eq:cost_linear}
\end{align}
where in the last step we have substituted $v_{b,\RN{1}} = v_{\RN{1}}$, $v_{b,\RN{2}} = v_{\RN{2}}$, $v_{a,\RN{1}} = \lambda v_{\RN{1}}$ and $v_{a,\RN{2}} = \lambda v_{\RN{2}}$. By our choice of $\varepsilon$ this cost is negative unless the auxiliary function $\alpha (v) = \lambda f^{ab}(v) + f^{ba}(\lambda v)$ is linear in $v$, i.e. $\alpha (v) = \eta^{\alpha} v$. Since this has to hold for any $\lambda \neq 0$, we find that necessarily also $f^{ab}$ and $f^{ba}$ need to be linear in the rate of trading. 
To prove this last claim we expand $\alpha(v)$ around $v=0$ and use Corollary \ref{cor:absence} that $f^{ab}(0) = f^{ba}(0) = 0$:
\begin{align}
\nonumber
\label{eq:alpha}
\alpha(v) = \eta^{\alpha} v &= \lambda f^{ab}(v) + f^{ba}(\lambda v) \\ 
							&= \sum_{l=1}^{\infty}{ ( \lambda k_l^{ab} + \lambda^l k_l^{ba} ) v^l }
\end{align}
This can be linear for any $\lambda$ only if $k_l^{ab} = k_l^{ba} = 0$ for all $l > 1$, i.e. if both cross-impact terms $f^{ab}(v)$ and $f^{ba}(v)$ are linear in $v$. 

\end{proof}

\begin{proof}[Proof of Lemma \ref{lemma:linbounded_symm}]
Given the parameters $\eta^{ij}$ determining the strength of self- and cross-impact and for any $v_a, v_b > 0$, assuming w.o.l.g. that $\eta^{ab} > \eta^{ba} > 0$, we choose 
\begin{equation}
\label{eq:eps_symm_2d}
\varepsilon \coloneqq \frac{1}{2} \frac{v_a v_b \left( \eta^{ab} - \eta^{ba} \right)}{2 \eta^{aa} v_a^2 + 2 \eta^{bb} v_b^2 + 3 \eta^{ab} v_a v_b + \eta^{ba} v_a v_b}
\end{equation}
so that $\varepsilon > 0$.\footnote{We can choose an equivalent $\varepsilon > 0$ in all other cases, i.e. for $\eta^{ba} > \eta^{ab} > 0$ by interchanging $a \leftrightarrow b$ in Equation (\ref{eq:eps_symm_2d}) and below. In the cases that either one or both of $\eta^{ab}$ and $\eta^{ba}$ are less than zero, $v_a$ and $v_b$ are to be chosen such that the denominator in (\ref{eq:eps_symm_2d}) is larger than zero and if the numerator in (\ref{eq:eps_symm_2d}) is less than zero we interchange $a \leftrightarrow b$ in order to ensure $\varepsilon > 0$.} 
Then by our assumption of continuity there exists a $T_{\varepsilon} > 0$ for which $|G^{ij}(0) - G^{ij}(\tau)| \leq \varepsilon$ for all $\tau$ with $0 \leq \tau \leq T_{\varepsilon}$ and for all $ i,j \in \{a,b\}$.
We implement the same asymmetric strategy (\ref{eq:strategy_asymm}) as in Example \ref{eg:linperm_asymm} with $T = T_{\varepsilon}$ and calculate the cost of this strategy as $C(\Pi) = C^{aa} + C^{bb} + C^{ab} + C^{ba}$ where $C^{aa}$ and $C^{bb}$ are the self-impact costs of trading in assets $a$ and $b$ respectively and $C^{ab}$ and $C^{ba}$ are the costs due to cross-impact from asset $b$ to $a$ and vice versa. The explicit calculation of the self-impact terms gives
\begin{align}
C^{aa} &= \eta^{aa} v_a^2 \left\{ \int_0^{T/3} \mathrm{d}t \int_0^t G^{aa}(t-s) \mathrm{d}s + \int_{2T/3}^{T} \mathrm{d}t \int_{2T/3}^t G^{aa}(t-s) \mathrm{d}s - \int_{2T/3}^{T} \mathrm{d}t \int_0^{T/3} G^{aa}(t-s) \mathrm{d}s \right\}  	\nonumber 	\\ 
	   &\leq   \eta^{aa} v_a^2 \left\{ \frac{T^2}{18} + \frac{T^2}{18} - (1-\varepsilon) \frac{T^2}{9} \right\} \nonumber \\
       &= \varepsilon \eta^{aa} v_a^2 \frac{T^2}{9} \nonumber \\
C^{bb} &\leq \varepsilon \eta^{bb} v_b^2 \frac{T^2}{9} \label{eq:cost_asymm_self}
\end{align}
and likewise for cross-impact:
\begin{align}
C^{ab} &= \eta^{ab} v_a v_b \left\{ \!-\! \int_0^{\!T/3} \mathrm{d}t \int_0^t \!G^{ab}(t-s) \mathrm{d}s +\! \int_{2T/3}^{T}\! \mathrm{d}t \int_{0}^{T/3}\! G^{ab}(t-s) \mathrm{d}s -\! \int_{2T/3}^{T}\! \mathrm{d}t \int_{T/3}^{2T/3} \!G^{ab}(t-s) \mathrm{d}s \right\}	 \nonumber \\
	   &\leq \eta^{ab} v_a v_b \left\{ -(1-\varepsilon) \frac{T^2}{18} + \varepsilon \frac{T^2}{9} \right\} \nonumber \\
       &= \eta^{ab} v_a v_b \frac{T^2}{18} (-1 + 3\varepsilon) \nonumber \\
C^{ba} &= \eta^{ba} v_a v_b \left\{ - \int_0^{T/3} \mathrm{d}t \int_0^t G^{ba}(t-s) \mathrm{d}s + \int_{T/3}^{2T/3} \mathrm{d}t \int_{0}^{T/3} G^{ba}(t-s) \mathrm{d}s \right\} \nonumber \\  
	   &\leq \eta^{ba} v_a v_b \left\{ -(1-\varepsilon) \frac{T^2}{18} + \frac{T^2}{9} \right\} \nonumber \\
       &= \eta^{ba} v_a v_b \frac{T^2}{18} (1 + \varepsilon) \: .
\label{eq:cost_asymm_cross}
\end{align}
Summing over all terms yields 
\begin{align}
C(\Pi) &\leq v^a v^b \frac{T^2}{18} (\eta^{ba} - \eta^{ab}) + 
			\varepsilon \frac{T^2}{18} \left( 2 \eta^{aa} v_a^2 + 2 \eta^{bb} v_b^2 + 3 \eta^{ab} v_a v_b + \eta^{ba} v_a v_b \right)\nonumber \\
       &= v^a v^b \frac{T^2}{36} (\eta^{ba} - \eta^{ab}) \nonumber \\
       &< 0
\label{eq:cost_asymm_bounded}
\end{align}
and we conclude that there is a price manipulation unless cross-impact is symmetric, i.e. we require $\eta^{ij} = \eta^{ji}$.

\end{proof}


\section{Power-law decay and impact}
\label{app:uniquedelta}

A popular class of unbounded decay kernels are power law kernels, e.g. $\bm{G}(\tau) \sim \tau^{-\gamma}, \: 0 < \gamma < 1$. Their advantage lies in allowing for more realistic parametrizations of the market impact function, such as concave power-law impact $\bm{f}(v) \sim \sgn (v) |v|^{\delta}$ for $0 < \delta < 1$. While \cite{gatheral2010no} establishes necessary conditions for such a model to be consistent in the one-dimensional case, numerical optimizations reported in \cite{curato2016discrete} find that violations of the principle of no-dynamic-arbitrage occur even when these conditions are verified, proofing them to be necessary but not sufficient. It  remains an open problem whether and under what conditions power law decay kernels and market impact functions are consistent. 
In this section we do not address this question but consider necessary constraints that arise from the presence of cross-impact. Specifically we show that in this case the shape parameter of the market impact function $\bm{f}$ needs to be unique for all self- and cross-impact terms.

\begin{lemma}
\label{lemma:uniquedelta}
Assume a price process as in (\ref{eq:price_ND}) where decay of market impact $\bm{G}(\tau)$ is a power law function, i.e. $G^{ij}(\tau) = \tau^{-\gamma^{ij}}$ with $0 < \gamma^{ij} < 1$ and $f^{ij}(v) = \eta^{ij} \sgn (v) |v|^{\delta^{ij}}$ is also power-law 
with $\eta^{ij} \geq 0$ for all $i,j$.\footnote{In the one-dimensional case \cite{gatheral2010no} shows that for absence of dynamic arbitrage it is also necessary that $\gamma \geq \gamma^* = 2- \frac{\log 3}{\log 2} \approx 0.415$ and $\gamma + \delta \geq 1$.}
Then absence of dynamic arbitrage requires that \begin{equation}
\delta^{ij} = \delta \quad \forall \: i,j \: . 
\label{eq:f_powerlaw}
\end{equation} 
\end{lemma}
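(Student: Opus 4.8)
The plan is to follow the template of Lemmas \ref{lemma:finite_linear}--\ref{lemma:linbounded_symm}: build an explicit round-trip strategy in two assets and show its cost is negative unless all exponents coincide. The advantage of the pure power-law setting is that every double integral in the cost decomposition is evaluated in closed form, so no continuity-at-$\tau=0$ argument is needed and one can instead work exactly and let the trading rates (and the horizon) scale.

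Concretely, I would first take the two-phase in--out strategy of Example \ref{eg:strategy_easy}, trading at constant rates $v_{i,\RN{1}},v_{i,\RN{2}}$ with ratios $\kappa=v_{i,\RN{1}}/v_{i,\RN{2}}<0$ and $\lambda=v_{a,\RN{1}}/v_{b,\RN{1}}$. Substituting $G^{ij}(\tau)=\tau^{-\gamma^{ij}}$ and $f^{ij}(v)=\eta^{ij}\sgn(v)|v|^{\delta^{ij}}$ into the terms $C_A^{ij},C_B^{ij},C_C^{ij}$, each time integral becomes a power of the phase lengths, and the total cost factorizes as $C(\Pi)=\sum_{i,j}\eta^{ij}\,c^{ij}\,|v_i|\,|v_j|^{\delta^{ij}}\,T^{\,2-\gamma^{ij}}$, where the dimensionless $c^{ij}$ depend only on $\kappa$, $\lambda$ and $\gamma^{ij}$. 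Exactly as in the single-asset case of \cite{gatheral2010no}, the self-impact contributions carry a non-negative weight, since $v_i f^{ii}(v_i)=\eta^{ii}|v_i|^{1+\delta^{ii}}\ge 0$, whereas the cross weights $c^{ij}$, $i\neq j$, change sign under $\lambda\mapsto-\lambda$ or under the relabelling $a\leftrightarrow b$, so they can be forced strictly negative (using $\eta^{ij}\ge0$). Proceeding as in the closing step of the proof of Lemma \ref{lemma:finite_linear}, the requirement $C(\Pi)\ge0$ for the two-phase strategy with $\lambda>0$ reduces to an antisymmetric identity in the rates which forces the auxiliary function $\alpha(v)=\lambda f^{ab}(v)+f^{ba}(\lambda v)=\lambda\eta^{ab}v^{\delta^{ab}}+\eta^{ba}\lambda^{\delta^{ba}}v^{\delta^{ba}}$ to be a single monomial in $v$ for every $\lambda\neq0$; since $0<\delta^{ij}<1$ this is possible only if $\delta^{ab}=\delta^{ba}$ (the degenerate case $\eta^{ab}=0$ or $\eta^{ba}=0$ being vacuous for the corresponding index).

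It remains to tie the common cross exponent to the self exponents, i.e.\ to show $\delta^{ab}=\delta^{aa}$ (and, by $a\leftrightarrow b$, $\delta^{ba}=\delta^{bb}$), which is where a genuine scaling argument enters. Fixing a strategy active in both assets and setting $v_a=\mu^{p}$, $v_b=\mu^{q}$ with $p,q>0$, the $ij$ term behaves like $\mu$ raised to the power $p\mathbf{1}_{\{i=a\}}+q\mathbf{1}_{\{i=b\}}+\delta^{ij}\bigl(p\mathbf{1}_{\{j=a\}}+q\mathbf{1}_{\{j=b\}}\bigr)$, so letting $\mu\to0^+$ or $\mu\to\infty$ makes the term(s) of smallest, resp.\ largest, exponent dominate. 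If the four exponents are not all equal I would choose $p,q$, the direction of the limit, the strategy's phase ratios and the labelling of $\{a,b\}$ so that the dominant surviving term is a cross term with negative coefficient, forcing $C(\Pi)<0$ and hence a price manipulation; this can be arranged in every configuration — for instance, if $\delta^{ab}=\delta^{ba}=\delta^{aa}<\delta^{bb}$ one takes $p=1$, $\delta^{aa}/\delta^{bb}<q<1$ and $\mu\to0$, so that the $C^{ba}$ term dominates.

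The step I expect to be the main obstacle is precisely verifying this last claim for \emph{all} non-constant orderings of the four numbers $\delta^{aa},\delta^{ab},\delta^{ba},\delta^{bb}$: it amounts to a finite collection of linear feasibility problems in $(p,q)$, and the awkward case is when the two self exponents happen to be the extreme values, so that the uniform scaling $v_a=v_b$ never exposes a cross term and one genuinely needs $p\neq q$, sometimes together with a switch between the limits $\mu\to0$ and $\mu\to\infty$. A secondary, routine technicality is to pin down the signs of the $c^{ij}$ for the chosen strategy — positivity of the self weights and negativity of the relevant cross weight after the appropriate reversal — which rests on the same positivity properties of power-law double integrals already used in the one-dimensional analysis.
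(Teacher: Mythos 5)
Your plan and the paper's proof share the same engine---the exact power-law evaluation of the two-phase strategy of Example \ref{eg:strategy_easy} followed by a scaling of the trading rates---but the step you treat as settled is the one that breaks, and the step you defer is where the actual proof lives. The reduction of $C(\Pi)\ge 0$ to an ``antisymmetric identity'' forcing $\alpha(v)=\lambda f^{ab}(v)+f^{ba}(\lambda v)$ to be a single monomial does not survive the passage from bounded to power-law kernels: in the proof of Lemma \ref{lemma:finite_linear} the symmetric part of the cost (in particular the self-impact round-trip cost) cancels at leading order only because, by right-continuity and the normalization $G^{ij}(0)=1$, all kernels are within $\varepsilon$ of one and the same constant over the horizon $T_\varepsilon$, so the model is effectively one of permanent impact on that horizon. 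For $G^{ij}(\tau)=\tau^{-\gamma^{ij}}$ there is no such limit: the exact computation gives $C^{ii}=\Lambda^{ii}v_i^{1+\delta^{ii}}$ with $\Lambda^{ii}\propto T^{2-\gamma^{ii}}\bigl(2^{\gamma^{ii}}-1\bigr)>0$, i.e.\ strictly positive self round-trip costs of the same order as the cross terms, and each of the four contributions carries its own factor $T^{2-\gamma^{ij}}$. Hence $C(\Pi)\ge 0$ yields only inequalities, never an identity, and your intermediate conclusion $\delta^{ab}=\delta^{ba}$ is unjustified at that point.

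What you flag as the main obstacle---choosing the scaling so that a negative cross term dominates for every ordering of the four exponents---is precisely the content of the paper's proof, and your proposal carries it out only for one illustrative configuration. The paper organizes it with a two-stage scaling that largely avoids the enumeration: take equal phases, arrange the trade directions so that the cross contributions are negative as in (\ref{eq:cost_fpowerlaw_cross}), and set $v_a=\lambda v$, $v_b=v$. The four terms then scale as $\lambda^{0}$, $\lambda^{\delta^{ba}}$, $\lambda^{1}$, $\lambda^{1+\delta^{aa}}$, and since $\delta^{ba}<1$ the choice of small $\lambda$ isolates the competition between the positive $\Lambda^{bb}v^{1+\delta^{bb}}$ and the negative $\lambda^{\delta^{ba}}\Lambda^{ba}v^{1+\delta^{ba}}$; sending $v\to 0^{+}$ if $\delta^{bb}>\delta^{ba}$, or $v\to\infty$ if $\delta^{bb}<\delta^{ba}$, produces $C<0$, so $\delta^{ba}=\delta^{bb}$, and exchanging $a\leftrightarrow b$ gives $\delta^{ab}=\delta^{aa}$. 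A second pass with the grouped coefficient $\Lambda^{b}=\Lambda^{bb}-\lambda^{\delta^{b}}\Lambda^{ba}>0$ then forces the two remaining exponents to coincide, see (\ref{eq:cost_fpowerlaw_sum})--(\ref{eq:cost_fpowerlaw_sum2}). Your $(\mu^{p},\mu^{q})$ program could be completed along these lines, but as written the proposal establishes the lemma only modulo the faulty antisymmetry shortcut and the unfinished case analysis.
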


\begin{proof}[Proof of Lemma \ref{lemma:uniquedelta}]

Consider a strategy of two phases lasting equally long where at first we build up a position at a constant trading rate from time $0$ until time $\Theta = T/2$ and then liquidate the position in a second phase from $T/2$ until $T$, i.e.
\begin{align}
\Pi = \left\{ \dot{\bm{x}}_t \right\} \quad, \quad \dot{\bm{x}}_t = \begin{cases} 
\left(v_{a}, v_{b} \right)^{\intercal} \quad &\text{for} \quad 0 \leq t \leq T/2 \\
\left(-v_{a}, -v_{b} \right)^{\intercal} \quad &\text{for} \quad T/2 < t \leq T 
\end{cases}.
\label{eq:strategy_specific_xx}
\end{align}
which is a special case of Example \ref{eg:strategy_easy}. Further we use the notation $\lambda = v_{a} / v_{b}$.  
Explicit calculation of the cost terms due to self-impact yields 
\begin{align}
C^{ii} &= \eta^{ii} v_i^{1+\delta^{ii}} \left[ \int_0^{T/2}{ \mathrm{d}t \int_0^t{\mathrm{d}s \tau^{-\gamma^{ii}} }} + \int_{T/2}^{T}{ \mathrm{d}t \int_{T/2}^t{\mathrm{d}s \tau^{-\gamma^{ii}} }} - \int_{T/2}^{T}{ \mathrm{d}t \int_0^{T/2}{\mathrm{d}s \tau^{-\gamma^{ii}} }} \right] \nonumber \\
	   &= \frac{\eta^{ii} v_i^{1+\delta^{ii}}}{1-\gamma^{ii}} \left[ 2 \int_0^{T/2}{  t^{1-\gamma^{ii}} \mathrm{d}t } - \int_{T/2}^{T}{ \left\{ (t-\frac{T}{2})^{1-\gamma^{ii}} - t^{1-\gamma^{ii}} \right\} \mathrm{d}t }  \right] \nonumber \\
       &= \frac{\eta^{ii} v_i^{1+\delta^{ii}} T^{2-\gamma^{ii}}}{(1-\gamma^{ii})(2-\gamma^{ii})}   \left[ 2^{\gamma^{ii}} -1 \right] \nonumber \\
       &=: \Lambda^{ii}(\eta^{ij}, \gamma^{ij}, T) v_i^{1+\delta^{ii}}
 \label{eq:cost_fpowerlaw_self}
\end{align}
and similarly we obtain for cross-impact
\begin{align}
C^{ij} &= - \eta^{ij} v_i v_j^{\delta^{ij}} \left[ \int_0^{T/2}{ \mathrm{d}t \int_0^t{\mathrm{d}s \tau^{-\gamma^{ij}} }} + \int_{T/2}^{T}{ \mathrm{d}t \int_{T/2}^t{\mathrm{d}s \tau^{-\gamma^{ij}} }} - \int_{T/2}^{T}{ \mathrm{d}t \int_0^{T/2}{\mathrm{d}s \tau^{-\gamma^{ij}} }} \right] \nonumber \\
       &= - \frac{\eta^{ij} v_i v_j^{\delta^{ij}} T^{2-\gamma^{ij}}}{(1-\gamma^{ij})(2-\gamma^{ii})}   \left[ 2^{\gamma^{ij}} -1 \right] \nonumber \\
       &=: - \Lambda^{ij}(\eta^{ij}, \gamma^{ij}, T) v_i v_j^{\delta^{ij}}
 \label{eq:cost_fpowerlaw_cross}
\end{align}
with $\Lambda^{ij} > 0 \: \forall \: ij$. Since $\lambda = v_a / v_b$ we substitute w.o.l.g. $v_a = \lambda v$, $v_b = v$. The total cost of the strategy is thus
\begin{align}
C &= \sum_{ij}{C^{ij}} \nonumber \\
 &= \Lambda^{bb} v^{1+\delta^{bb}} - \lambda^{\delta^{ba}} \Lambda^{ba} v^{1+\delta{ba}} - \lambda \Lambda^{ab} v^{1+\delta{ab}} + \lambda^{1+\delta^{aa}} \Lambda^{aa} v^{1+\delta^{aa}} \nonumber \\
 &= \Lambda^{bb} v^{1+\delta^{bb}} - \lambda^{\delta^{ba}} \Lambda^{ba} v^{1+\delta{ba}} + \mathcal{O}(\lambda)
 \label{eq:cost_fpowerlaw_sum}
\end{align}
where in the last step we are choosing $\lambda$ small enough so that linear terms in $\lambda$ can be neglected. Then if $\delta^{bb} > \delta^{ba}$ we can choose $v > 0$ small enough so that  $C < 0$ and likewise if $\delta^{bb} < \delta^{ba}$ we can choose $v$ large enough so that there is a price-manipulation. Therefore we require $\delta^{bb} = \delta^{ba} =: \delta^b$ and likewise $\delta^{aa} = \delta^{ab} =: \delta^a$. Therefore we can re-express the cost as 
\begin{align}
C &= \Lambda^{bb} v^{1+\delta^{b}} - \lambda^{\delta^{b}} \Lambda^{ba} v^{1+\delta{a}} - \lambda \Lambda^{ab} v^{1+\delta{a}} + \mathcal{O} ( \lambda^{1+\delta^{a}} ) \nonumber \\
 &= \underbrace{(\Lambda^{bb} - \lambda^{\delta^b} \Lambda^{ba})}_{:= \Lambda^b} v^{1+\delta^{b}} - \lambda^{\delta^{a}} \Lambda^{ba} v^{1+\delta{a}} + \mathcal{O}(\lambda^{1+\delta^a})
 \label{eq:cost_fpowerlaw_sum2}
\end{align}
now also considering linear terms in $\lambda$. Since $\lambda$ is small we can use that $\Lambda^b>0$ and by the same arguments as above we conclude that absence of arbitrage requires $\delta^a = \delta^b = \delta$. 

\end{proof}

\section{List of ISINs}
\label{app:isins}
\renewcommand{\floatpagefraction}{0.8}

Table \ref{tab:bonds} reports basic descriptive statistics and liquidity measures for all the $N=33$ bonds that were used for estimations. The set of bonds was selected as all fixed rate or zero-coupon Italian sovereign bonds with at least 5,000 trades throughout our sample. Note that some bonds were issued during our sample period and therefore less than 194 trading days were observed.

The description for each bond lists the bond type (BTPs refers to fixed-income treasury bonds, CTZ to zero coupon bonds), the fixed interest rate (where applicable) and the maturity date. Maturity is the time from issuance of a bond to the maturity date in years and time-to-maturity is calculated as the remaining time from the end of our sample (October 16, 2015) to the maturity date. The trade volume measures in Table \ref{tab:bonds} (mean traded volume per day and mean volume per trade) are reported as face volume traded and to arrive at the value one has to multiply by the price. The liquidity measures of the limit order book (mean number of limit order book updates per day, mean spread and ratio of tick size over mean spread) are computed from 10:00 to 17:00 of each day as we restrict our analysis to this period to avoid intraday seasonalities. Bid-ask spread is given in units of basis points of the face value, e.g. an average spread $14.9 \text{bp}$ corresponds to a contract with a standard face value of EUR $100$ offered at a mean spread of $14.9$ euro-cents.

\floatpagestyle{plain}
\begin{table}[t]
\begin{adjustbox}{max width=1.0\textwidth}
\centering
\begin{tabular}{l|l|rrrrrrrrr} 
ISIN & description & \rot{maturity (in years)} & \rot{time-to-maturity (in years) } & \rot{\# days with observations } & \rot{ avg. \# trades per day } & \rot{ avg traded volume per day (in million EUR) } & \rot{ avg. volume per trade (in 1,000 EUR) } & \rot{ avg. \# LOB updates per day (in 1,000)} & \rot{ avg. spread (in basis points) } & \rot{ tick size / avg. spread}\\  
  \hline
IT0001278511 & BTPS 5.250  01/11/29 & 31.0 & 14.1 & 194 & 108.8 & 5.5 & 50.3 & 31.4 & 14.9 & 0.07 \\ 
  IT0003535157 & BTPS 5.000  01/08/34 & 31.0 & 18.8 & 194 & 103.3 & 5.5 & 52.9 & 44.3 & 22.7 & 0.04 \\ 
  IT0003934657 & BTPS 4.000  01/02/37 & 31.5 & 21.3 & 194 & 882.1 & 66.5 & 75.4 & 31.0 & 5.5 & 0.18 \\ 
  IT0004009673 & BTPS 3.750  01/08/21 & 15.5 & 5.8 & 194 & 84.4 & 5.8 & 68.5 & 18.5 & 5.6 & 0.18 \\ 
  IT0004019581 & BTPS 3.750  01/08/16 & 10.4 & 0.8 & 194 & 29.2 & 1.8 & 60.5 & 4.9 & 2.3 & 0.04 \\ 
  IT0004164775 & BTPS 4.000 01/02/17 & 10.1 & 1.3 & 194 & 22.8 & 1.5 & 67.8 & 3.2 & 3.7 & 0.03 \\ 
  IT0004361041 & BTPS 4.500  01/08/18 & 10.3 & 2.8 & 194 & 23.3 & 2.6 & 111.9 & 6.8 & 5.1 & 0.20 \\ 
  IT0004423957 & BTPS 4.500  01/03/19 & 10.5 & 3.4 & 194 & 22.4 & 2.4 & 108.1 & 10.8 & 5.6 & 0.18 \\ 
  IT0004489610 & BTPS 4.250  01/09/19 & 10.3 & 3.9 & 194 & 42.7 & 4.1 & 95.6 & 12.2 & 5.7 & 0.18 \\ 
  IT0004532559 & BTPS 5.000  01/09/40 & 31.0 & 24.9 & 194 & 179.3 & 10.6 & 59.1 & 35.0 & 17.6 & 0.06 \\ 
  IT0004536949 & BTPS 4.250  01/03/20 & 10.4 & 4.4 & 194 & 41.2 & 6.7 & 162.8 & 14.3 & 5.9 & 0.17 \\ 
  IT0004594930 & BTPS 4.000  01/09/20 & 10.4 & 4.9 & 194 & 57.6 & 5.1 & 89.3 & 15.9 & 5.8 & 0.17 \\ 
  IT0004634132 & BTPS 3.750  01/03/21 & 10.5 & 5.4 & 194 & 49.3 & 4.2 & 85.3 & 18.0 & 6.1 & 0.16 \\ 
  IT0004695075 & BTPS 4.750  01/09/21 & 10.5 & 5.9 & 194 & 25.3 & 2.3 & 89.7 & 17.8 & 6.7 & 0.15 \\ 
  IT0004759673 & BTPS 5.000  01/03/22 & 10.5 & 6.4 & 194 & 30.7 & 3.1 & 99.9 & 18.1 & 7.5 & 0.13 \\ 
  IT0004801541 & BTPS 5.500 01/09/22 & 10.5 & 6.9 & 194 & 25.2 & 2.3 & 91.4 & 19.0 & 8.3 & 0.12 \\ 
  IT0004848831 & BTPS 5.500 01/11/22 & 10.2 & 7.0 & 194 & 20.9 & 2.2 & 107.2 & 16.8 & 8.7 & 0.11 \\ 
  IT0004898034 & BTPS 4.500  01/05/23 & 10.2 & 7.5 & 194 & 35.1 & 3.6 & 101.4 & 18.8 & 8.3 & 0.12 \\ 
  IT0004923998 & BTPS 4.750  01/09/44 & 31.3 & 28.9 & 194 & 252.9 & 18.7 & 73.9 & 33.1 & 15.4 & 0.07 \\ 
  IT0004953417 & BTPS 4.500  01/03/24 & 10.6 & 8.4 & 194 & 50.9 & 9.2 & 180.0 & 23.4 & 5.5 & 0.18 \\ 
  IT0005001547 & BTPS 3.750  01/09/24 & 10.5 & 8.9 & 194 & 47.1 & 7.3 & 154.7 & 22.5 & 6.5 & 0.15 \\ 
  IT0005024234 & BTPS 3.500  01/03/30 & 15.8 & 14.4 & 194 & 257.3 & 22.6 & 87.8 & 27.2 & 9.3 & 0.11 \\ 
  IT0005028003 & BTPS 2.150  15/12/21 & 7.5 & 6.2 & 194 & 83.1 & 14.4 & 173.0 & 16.7 & 4.7 & 0.21 \\ 
  IT0005030504 & BTPS 1.500  01/08/19 & 5.1 & 3.8 & 194 & 25.8 & 6.7 & 258.5 & 10.5 & 5.4 & 0.18 \\ 
  IT0005044976 & CTZ 14- 30/08/16 24M & 2.0 & 0.9 & 194 & 31.8 & 4.3 & 135.3 & 2.5 & 1.7 & 0.06 \\ 
  IT0005045270 & BTPS 2.500 01/12/24 & 10.3 & 9.1 & 194 & 202.1 & 29.0 & 143.7 & 25.7 & 4.4 & 0.23 \\ 
  IT0005069395 & BTPS 1.050  01/12/19 & 5.0 & 4.1 & 194 & 74.6 & 14.2 & 190.3 & 11.2 & 4.2 & 0.24 \\ 
  IT0005083057 & BTPS 3.250  01/09/46 & 31.6 & 30.9 & 163 & 892.4 & 78.3 & 87.7 & 31.4 & 6.1 & 0.16 \\ 
  IT0005086886 & BTPS 1.350 15/04/22 & 7.2 & 6.5 & 145 & 124.1 & 16.5 & 133.1 & 18.0 & 4.7 & 0.21 \\ 
  IT0005090318 & BTPS 1.500  01/06/25 & 10.3 & 9.6 & 135 & 332.7 & 36.8 & 110.7 & 28.3 & 4.3 & 0.23 \\ 
  IT0005094088 & BTPS 1.650  01/03/32 & 17.0 & 16.4 & 134 & 603.3 & 41.4 & 68.7 & 22.9 & 6.1 & 0.16 \\ 
  IT0005107708 & BTPS 0.700  01/05/20 & 5.0 & 4.5 & 121 & 65.5 & 11.7 & 178.6 & 12.3 & 5.6 & 0.18 \\ 
  IT0005127086 & BTPS 2.000  01/12/25 & 10.3 & 10.1 & 35 & 174.8 & 14.6 & 83.3 & 18.6 & 3.8 & 0.26 \\ 
   \hline
\end{tabular}
\end{adjustbox}
\caption{Descriptives and liquidity measures for the set of bonds used in estimation. }
\label{tab:bonds}
\end{table}

%
\end{document}